\newtheorem{remark}{Remark}
\newtheorem{theorem}{Theorem}
\DeclarePairedDelimiter\ceil{\lceil}{\rceil}
\DeclarePairedDelimiter\floor{\lfloor}{\rfloor}
\newtheorem{lemma}{Lemma}
\newtheorem{proposition}{Proposition}
\newcommand\bl[1]{{\color{blue}#1}}
\def\BibTeX{{\rm B\kern-.05em{\sc i\kern-.025em b}\kern-.08em
    T\kern-.1667em\lower.7ex\hbox{E}\kern-.125emX}}
\begin{document}

\title{On the Age of Information in Erasure Channels with Feedback
}

\author{%
   \IEEEauthorblockN{Alireza Javani,
                     Marwen Zorgui,
                     and Zhiying Wang
                     }
   \IEEEauthorblockA{
                     Center for Pervasive Communications and Computing\\
  					University of California, Irvine\\
                     \{ajavani, mzorgui, zhiying\}@uci.edu}

 }

\maketitle

\begin{abstract}
We consider a status updating system where having timely knowledge about the information source at the destination (monitor) is of utmost importance. By utilizing the age of information (AoI) metric,  the freshness of status update over an erasure channel is investigated. Due to the erasure nature of the update transmission, an error-free feedback channel from the monitor to the source is necessary for reducing AoI. Each status update contains $K$ packets which can be sent through the channel one at a time. At each channel use, one status update is available from the information source. Depending on how many packets have been received successfully out of the $K$ packets, we need to decide whether to continue sending the current update or terminate it and start sending the newly generated update. In this paper, we find the optimal failure tolerance when the erasure probability ($\epsilon$) is in the regime $\epsilon \to 0$ and also provide a lower and an upper bound for the average AoI for all erasure probabilities. Moreover, for all $\epsilon$, we provide a lower bound for failure tolerance to minimize peak AoI. 

\end{abstract}

\begin{IEEEkeywords}
Age of Information, Status update, Erasure channel, Feedback, Internet of Things.
\end{IEEEkeywords}
\vspace{-0.25cm}
\section{Introduction}
Timeliness has become a major requirement in many applications such as wireless sensor networks, environmental monitoring, health monitoring, as well as applications in vehicular networks, surveillance systems and internet of things (IoT). The communication channel between the source of information and the desired destination is required to transmit observations from the status of the information source in a timely manner.  

In \cite{kaul2012real}, age of information (AoI) is introduced to quantitatively measure the timeliness of the transmitted information. 
Updates are generated at the information source, and delivered to the destination (monitor).
AoI is defined as the amount of time elapsed since the generation of the last received update and average AoI is the average age over all time. Extensions to
networks of multiple sources and servers with and without
packet management  are studied in \cite{yates2018age,javani2019age,abdelmagid2019reinforcement,kam2014effect,costa2014age,abdelmagid2019deep}.
In order to quantify the maximum age at the monitor, peak AoI is introduced in \cite{costa2016age}  as the age at the monitor right before receiving an update.

The focus of this paper is to study age of information in an erasure channel with feedback. Several previous works addressed scenarios under erasure channels and/or feedback.
In \cite{najm2017status}, the authors consider a system where updates are generated according to a Poisson process and sent through an erasure channel. Two hybrid ARQ protocols are considered: infinite incremental redundancy (IIR), and fixed redundancy (FR). 
In \cite{yates2017timely}, the authors assume a just-in-time generation process and transmission over an erasure channel. They investigate the IIR and FR schemes. Authors in \cite{najm2019optimal} also consider transmission over an erasure channel where the generation of source updates is assumed to be periodic. 
Recently in \cite{chen2019benefits, feng2019adaptive}, AoI is studied in the setting of two-user broadcast symbol erasure channels with feedback. In particular,  in \cite{chen2019benefits}, the benefits of network coding in terms of age are investigated, while in \cite{feng2019adaptive}, the authors propose an adaptive coding scheme achieving small AoI at both users. 

In this paper, we consider a system model similar to \cite{najm2019optimal} but with feedback.
In particular, updates of a source are transmitted through a symbol erasure channel to a monitor. Each source update is comprised of $K$ \emph{channel symbols}, also called \emph{packets}, and a new source update is available per channel use (or time unit). The monitor employs an error-free feedback channel to notify the source of symbol erasures. We note that the feedback cost per packet is $1$ bit, which is negligible if the packet size is large. From this perspective, having a feedback channel does not incur a big cost on the system while helping reduce the age of information.

We ask the following question: Given the knowledge of the previous successful and erased packet transmissions for an update, should the source continue to transmit or drop the update, in order to minimize the age of information? 
We first derive an expression of average AoI related to the number of time units for the terminated update transmissions and for the successful update transmission.
Based on that, policies with zero and infinite error tolerance are investigated.
When $\epsilon$ is close to $0$, the optimal policy is proved to be zero error tolerance for the first two packets. 
We also provide a lower and an upper bound for optimal average AoI. It is observed that the upper bound based on the infinite error tolerance policy is numerically close to the optimal policy. 

Moreover, we investigate the average peak AoI. We prove that the error tolerance should increase as the number of successful packets increases. We also provide a bound for the optimal policy parameters, and simulation shows that its peak AoI is close to optimal.

The paper is organized as follows. In Section~\ref{sec:pre}, we introduce our system model and present general expressions for calculating the average and the peak AoI. The average AoI is studied in Section~\ref{sec:ave} under several scenarios.
The peak AoI is investigated for several special cases in Section~\ref{sec:peak}, and Section~\ref{sec:conclude} concludes the paper.

\noindent \textbf{Notation}. For a non-negative integer $n$, define $[n]~\coloneqq ~\{1, \ldots, n\}$.
\vspace{-0.2cm}
\section{System Model and Preliminaries}\label{sec:pre}
In this section, we present our model. The model consists of an information source, sending its information over an erasure channel. Each source symbol (update) is composed of $K$ channel symbols (packets), where each of these packets takes one channel use, and can be erased with probability~$\epsilon$.
Also, we assume that upon each channel use, a new update is available. At each channel use we have the option of continuing to send the remaining packets of the current update being transmitted, or terminating the current update transmission and start sending the newest available update. We assume the existence of an error-free feedback channel from the monitor to the source indicating whether a packet has been received successfully. 

The goal is to find a policy that minimizes AoI. 
AoI is defined as the time duration from the generation of an update at the information source to the current time.
Formally, the average AoI is defined as \cite{kaul2012real}
\vspace{-0.2cm}
\begin{align} \label{graph}
\Delta = \lim_{T \to \infty} \frac{1}{T} \int_0^{T} \Delta(t) dt.
\end{align}
Here $\Delta(t) = t - u(t)$ and $u(t)$ is the generation time of the most recent update at the monitor.
The (average) peak AoI $(PAoI)$ is defined as the value of age right before arrival of an update at the destination, averaged over all received updates.

To illustrate the challenges in minimizing AoI with feedback, we consider the following two cases. Consider that the source sends the first packet of an update and it is erased. Recall that per our system model, a source update is available for each channel use. Thus, it is obvious that the source should drop the current update being transmitted and send the first packet of the newly generated update in order to minimize AoI. If the first packet of the current update is successfully received, then the source should continue with the transmission of the current update. On the other hand, assume that the source has transmitted successfully $(K-1)$ packets out of~$K$~packets in $(K-1)$~channel uses, where $K$ is very large, and an erasure happens during the transmission of the last packet. Then, intuitively, the source should try sending the $K$-th packet at least one more time, instead of dropping the current update and starting over with a fresher one. 

 
In general, given~$K$ and~$\epsilon$, the decision of whether to continue transmission of the current update depends on the total number of successfully delivered packets at the considered instant, and also on the total time elapsed since the start of its transmission. In this work, we focus on a particular family of policies, described below.  
%
For $i \in \{1,...,K\}$, we define $c_i$ such that if $c_i$ consecutive erasures happen during the transmission of the $i$-th packet, the source drops the current update and starts over with the newly generated source update. That is,  $c_i -1$ is the maximum number of allowed erasures while transmitting the $i$-th packet of an update. A \emph{policy} is described by a vector $\mathbf{c}=[c_1, c_2, \ldots, c_K]$. Let $\Delta(\mathbf{c}), PAoI(\mathbf{c})$ denote the average AoI and the average peak AoI under the policy $\mathbf{c}$.
The main problem of interest in this paper is the following:
\begin{equation}
\begin{aligned}
& \underset{\mathbf{c}=[c_1, c_2, \ldots, c_K] }{\text{minimize}}
& & \Delta(\mathbf{c}) \text{ or } PAoI(\mathbf{c}), \\
& \text{subject to}
& &   c_i \in \mathbb{N} \backslash \{0\},  \; i \in \{1,...,K\}.
\end{aligned}
\end{equation}
Here $\mathbb{N}\backslash \{0\}$ denotes the set of positive integers. When it is clear from the context, we drop the argument and write $\Delta$ and $PAoI$. A policy depends on the number of successfully transmitted packets of an update, but not on the total number of transmissions.

We take a similar approach as \cite{yates2018age} for deriving the AoI formula in our setting.
As illustrated in Figure~\ref{AoI}, each $S_i$ indicates the time for the $i$-th successful update transmission, at which AoI jumps to a lower value as the destination has a fresher update from the information source. Because the channel is an erasure channel, one or multiple failed attempts to send an update to the destination may occur between two successful updates.
Each $D_i$ represents the total time for failed updates between the $(i-1)$-th and the $i$-th successful update. In particular, AoI increases linearly in the duration of $D_i+S_i$. We are interested in average AoI, or  the area under the curve per time unit, and peak AoI, or the peak points on the curve.
\begin{figure}
    \centering
\hbox{\hspace{-0.5cm}    \includegraphics[scale=0.38]{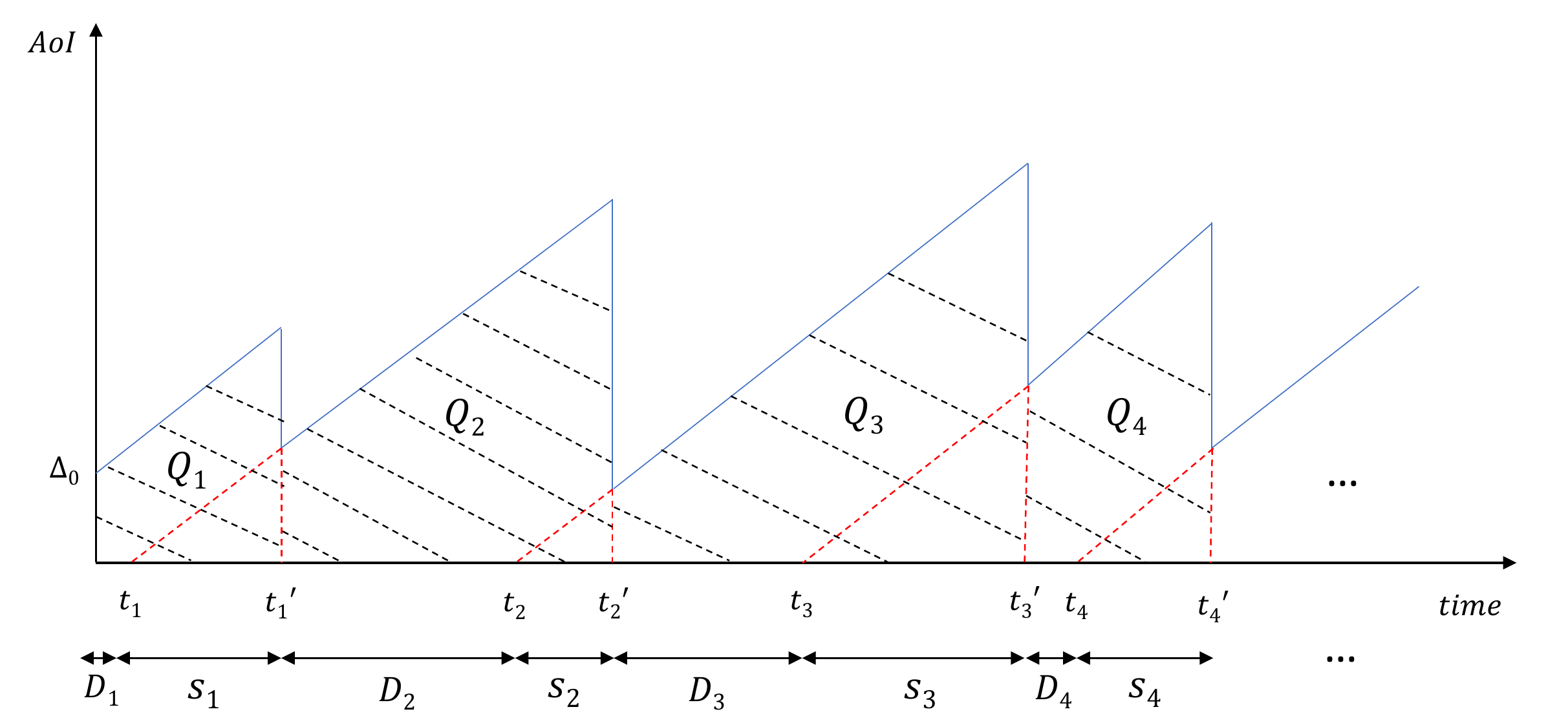} }
    \caption{AoI for the erasure channel with feedback. $t_i$ represents the generation time of the $i$-th successful update, and $t_i'$ denotes its delivery time.}
    \label{AoI}
\end{figure}
\begin{lemma}\label{lem1}
 Let $S, D$ be the random variables representing the time duration for a successful update transmission, and for all the failed update transmissions between successful ones, respectively. The average AoI is
\begin{align}  \label{formual}
\Delta =\mathbb{E}[S] + \frac{\frac{1}{2} \mathbb{E}[S^2] +\frac{1}{2} \mathbb{E}[D^2] + \mathbb{E}[S]\mathbb{E}[D]}{\mathbb{E}[S]+\mathbb{E}[D]}.
\end{align}
The (average) peak AoI is 
\begin{align} \label{formula_P}
    PAoI = 2\mathbb{E}[S]+\mathbb{E}[D].
\end{align}
\begin{proof}
Let $N(T) \coloneqq \max\{ {i | {t_{i}}^{\prime} \leq T}\}$ where ${t_i}^{\prime}$ is arrival time of the $i$-th successful update. The area under the AoI graph consists of several distinct trapezoid areas called $Q_i$ as shown in Figure \ref{AoI}, from time $t'_{i-1}$ to $t'_i$. Let $\Tilde{Q}$ represent the finite residual area in the integral when $N(T) < t \leq T$. Using the definition in \eqref{graph}, we write 
\vspace{-0.05cm}
\begin{small}
\begin{align}\label{q}
\frac{1}{T} \int_0^{T} \Delta(t) dt =  \frac{Q_1+ \Tilde{Q}}{T} + \frac{N(T)-1}{T} \frac{\sum_{i=2}^n Q_i}{N(T)-1}.
\end{align}
\end{small}
Each $Q_{i}$ can be calculated by subtracting $2$ adjacent isosceles. 
Consequently, we have $Q_i= \frac{1}{2} (S_i + D_i+ S_{i-1})^2 - \frac{1}{2} S_{i-1}^2= \frac{1}{2} (S_{i}^2+D_{i}^2+2S_i D_i + 2 S_i S_{i-1}+ 2D_i S_{i-1})$. The first term in \eqref{q} corresponds to the boundary effect and is negligible when $T \to \infty$. Since the channel and the policy do not change over time, the system is stationary and ergodic.  As $T \to \infty$. 
\vspace{-0.2cm}
\begin{equation*}
 \frac{Q_1+\Tilde{Q}}{T} \to 0, \text{ } \frac{N(T)-1}{T} \to \frac{1}{\mathbb{E}[S+D]}, \text{ } \frac{\sum_{i=2}^n Q_i}{n-1}= \mathbb{E}[Q].
\end{equation*}
Since the random variables $S$ and $D$ are independent in our setting, \eqref{graph} reduces to 
\begin{align*}
\Delta =& \frac{\frac{1}{2} (\mathbb{E}[S^2]+\mathbb{E}[D^2]+2\mathbb{E}[S]\mathbb{E}[D]+2\mathbb{E}[S]^2+2\mathbb{E}[D]\mathbb{E}[S])}{\mathbb{E}[S+D]} \nonumber\\
= &\mathbb{E}[S] + \frac{\frac{1}{2} \mathbb{E}[S^2] +\frac{1}{2} \mathbb{E}[D^2] + \mathbb{E}[S]\mathbb{E}[D]}{\mathbb{E}[S]+\mathbb{E}[D]}.
\end{align*}
The peak AoI, as from Figure~\ref{AoI}, is given by $S_{i-1} + S_{i}+D_{i}$, which is the value of age right before the arrival of $i$-th successful update at the destination. Therefore, the average peak AoI equals to $\mathbb{E}[S]+ \mathbb{E}[S+D]= 2\mathbb{E}[S]+ \mathbb{E}[D]$. 
\end{proof}
\end{lemma}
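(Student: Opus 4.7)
The plan is to follow the standard geometric decomposition of the AoI sawtooth curve. First I would partition the time axis at the successful-delivery instants $t_i'$, so that the integral in \eqref{graph} breaks into a finite boundary contribution plus a sum of trapezoidal regions $Q_i$ on the intervals $(t_{i-1}', t_i']$. On each such interval, the age grows linearly starting from the value $S_{i-1}$ (the service time of the most recently received update) up to $S_{i-1}+D_i+S_i$ just before the next successful arrival, and then drops back to $S_i$. I would compute the area of $Q_i$ as the difference of two right isosceles triangles, giving $Q_i = \tfrac{1}{2}(S_{i-1}+D_i+S_i)^2 - \tfrac{1}{2}S_{i-1}^2$, which expands to $\tfrac{1}{2}(S_i^2 + D_i^2) + S_i D_i + S_i S_{i-1} + D_i S_{i-1}$.

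Next I would take the limit $T \to \infty$. Since both the erasure channel and the policy (which depends only on the running number of successes for the update currently in service) are memoryless across updates, the sequence $\{(S_i, D_i)\}_{i\ge 1}$ is i.i.d., and in particular $S_i$ is independent of $(S_{i-1}, D_i)$. By the elementary renewal theorem applied to the inter-delivery times $S_i+D_i$, one has $N(T)/T \to 1/\mathbb{E}[S+D]$ almost surely; the boundary term vanishes after division by $T$; and the strong law of large numbers gives $(N(T)-1)^{-1}\sum_{i=2}^{N(T)} Q_i \to \mathbb{E}[Q]$. Using independence to replace $\mathbb{E}[S_i D_i]$, $\mathbb{E}[S_i S_{i-1}]$, and $\mathbb{E}[D_i S_{i-1}]$ by products of means, the resulting expression $\mathbb{E}[Q]/\mathbb{E}[S+D]$ simplifies: the $\mathbb{E}[S]^2 + \mathbb{E}[S]\mathbb{E}[D]$ piece factors as $\mathbb{E}[S]\cdot\mathbb{E}[S+D]$, which is exactly the standalone $\mathbb{E}[S]$ sitting outside the fraction in the claimed formula, leaving $\tfrac{1}{2}\mathbb{E}[S^2]+\tfrac{1}{2}\mathbb{E}[D^2]+\mathbb{E}[S]\mathbb{E}[D]$ in the numerator.

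For the peak AoI, the argument is shorter: the peak of the sawtooth during the $i$-th inter-delivery cycle is attained just before $t_i'$ and equals $S_{i-1}+D_i+S_i$, the total time elapsed since the previous received update was generated. Averaging over cycles and using that $\{S_i\}$ are identically distributed yields $PAoI = 2\mathbb{E}[S]+\mathbb{E}[D]$. The main technical obstacle is really the first step: correctly identifying the trapezoid (in particular, that its left leg has height $S_{i-1}$ rather than $0$ or $S_i$, because the most recently received update at the start of the cycle was itself generated $S_{i-1}$ time units before its own delivery) and justifying the i.i.d.\ structure of $\{(S_i,D_i)\}$ so that the cross-moments factor cleanly; once these are pinned down, the rest is bookkeeping together with standard renewal-theoretic limits.
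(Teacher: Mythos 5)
Your proposal is correct and follows essentially the same route as the paper's own proof: the same trapezoidal decomposition $Q_i = \tfrac{1}{2}(S_{i-1}+D_i+S_i)^2 - \tfrac{1}{2}S_{i-1}^2$, the same renewal/ergodic limits, the same use of independence to factor the cross-moments, and the same identification of the peak as $S_{i-1}+D_i+S_i$. If anything, you are slightly more explicit than the paper about \emph{which} independence is needed (that $S_i$ is independent of $(S_{i-1}, D_i)$) and why the left leg of the trapezoid has height $S_{i-1}$, but the argument is the same.
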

\vspace{-0.2cm}
\begin{remark}
When the communication channel is erasure-free, $\mathbb{E}[S]=K,\mathbb{E}[S^2]=K^2 $, and $\mathbb{E}[D]=\mathbb{E}[D^2]=0$. Therefore, by substituting these values in \eqref{formual} we achieve $\Delta = \frac{3K}{2}$. 
\end{remark}
\begin{lemma} \label{lem2}
For a given policy corresponding to a vector $\mathbf{c}$, the quantities in \eqref{formual} are given by
\begin{align}
\label{expect_S}
\mathbb{E}[S]&= \frac{(1-\epsilon)^K}{p}  \left(\sum_{i_1=0}^{c_1-1}   \dots \sum_{i_K=0} ^{c_K-1}  (\sum_{j=1}^{K} i_j+K) \epsilon^{\sum_{j=1}^{K} i_j} \right),\\
\label{expect_S_2}
\mathbb{E}[S^2]  &= \frac{(1-\epsilon)^K}{p}  \left(\sum_{i_1=0}^{c_1-1}   \dots \sum_{i_K=0} ^{c_K-1}  (\sum_{j=1}^{K} i_j+K)^2 \epsilon^{\sum_{j=1}^{K} i_j}\right),\\
 \mathbb{E}[D] &= \frac{1-p}{p} \mathbb{E}[d], \quad
 \mathbb{E}[D^2] = \frac{1-p}{p} \mathbb{E}[d^2]+ 2  \mathbb{E}[D]^2,
\end{align}
where $p$ is the probability of a successful delivery of an update, and $d$ is the random variable corresponding to the number of channel uses of a failed update, such that
\begin{align}
p&= (1-\epsilon^{c_1})(1-\epsilon^{c_2})\dots (1-\epsilon^{c_K}),\label{successful_update} \\
\mathbb{E}[d]&= \frac{1}{1-p} 
\left(
\sum_{j=1}^{K} \sum_{i_1= 0}^{c_1-1} \dots \sum_{i_{j-1}= 0}^{c_{j-1} -1}
(1 - \epsilon)^{j-1} \epsilon^{c_j +  \sum_{h= 1}^{j-1} i_h} 
\right.
\nonumber\\
& \left.
   \times (  j -1 + c_j +\sum_{h= 1}^{j-1} i_h )
\right)
\label{expect_d}
,\\
\mathbb{E}[d^2]&= \frac{1}{1-p}
\left(
\sum_{j=1}^{K} \sum_{i_1= 0}^{c_1-1} \dots \sum_{i_{j-1}= 0}^{c_{j-1} -1}
(1 - \epsilon)^{j-1} \epsilon^{c_j +  \sum_{h= 1}^{j-1} i_h} 
\right.
\nonumber\\
&  \left. \times (  j -1 + c_j +\sum_{h= 1}^{j-1} i_h )^2
\right).
\label{expect_d_2}
\end{align}
\end{lemma}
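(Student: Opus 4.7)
The plan is to enumerate the sample space of a single update-transmission attempt and then exploit the random-sum structure of $D$. Fixing a policy $\mathbf{c}$, I describe a single attempt by its vector of erasure counts $(i_1,\ldots,i_K)$: the $j$-th packet takes $i_j+1$ channel uses, where $i_j\in\{0,1,\ldots,c_j-1\}$ when the packet eventually gets through, while $i_j=c_j$ (with no success on that packet) signals an aborted attempt at the $j$-th stage. Successful attempts and failed attempts are then disjoint unions of these elementary outcomes with easily computable probabilities.

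For the successful side, the event that an attempt succeeds with counts $(i_1,\ldots,i_K)$ has probability $(1-\epsilon)^K\epsilon^{\sum_j i_j}$. Summing a geometric series over each coordinate immediately yields $p=\prod_j(1-\epsilon^{c_j})$, matching \eqref{successful_update}. Conditional on success the total duration is $S=K+\sum_j i_j$, so dividing the unconditional weighted sums $\sum(K+\sum_j i_j)^r(1-\epsilon)^K\epsilon^{\sum_j i_j}$ by $p$ produces \eqref{expect_S} for $r=1$ and \eqref{expect_S_2} for $r=2$. For a failed attempt, let $j$ be the unique packet index at which $c_j$ consecutive erasures occur; conditioning on $j$ and on the counts $(i_1,\ldots,i_{j-1})$ of the preceding successfully delivered packets yields the elementary probability $(1-\epsilon)^{j-1}\epsilon^{c_j+\sum_{h<j}i_h}$ and the attempt duration $(j-1)+c_j+\sum_{h<j}i_h$; summing and dividing by $1-p$ gives \eqref{expect_d} and \eqref{expect_d_2}.

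For the aggregate failure time $D$ between two consecutive successes, the key observation is that the number $N$ of failed attempts is geometric with parameter $p$ on $\{0,1,2,\ldots\}$, with $\mathbb{E}[N]=(1-p)/p$ and $\mathrm{Var}(N)=(1-p)/p^2$, and is independent of the i.i.d.\ durations $d_1,d_2,\ldots$ of those failed attempts; hence $D=\sum_{k=1}^N d_k$. Wald's identity delivers $\mathbb{E}[D]=(1-p)\mathbb{E}[d]/p$. For the second moment, the law of total variance gives $\mathrm{Var}(D)=\mathbb{E}[N]\mathrm{Var}(d)+\mathrm{Var}(N)\mathbb{E}[d]^2$; adding $\mathbb{E}[D]^2$ and simplifying collapses the three distinct $\mathbb{E}[d]^2$ contributions into $2\mathbb{E}[D]^2$, yielding the claimed formula.

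I do not anticipate a conceptual obstacle; the lemma is a careful but elementary bookkeeping exercise. The only mildly nonroutine step is the simplification of $\mathbb{E}[D^2]$, where the algebraic identity $(1-p)/p^2+(1-p)^2/p^2-(1-p)/p=2(1-p)^2/p^2$ must be invoked to recover the compact form $\frac{1-p}{p}\mathbb{E}[d^2]+2\mathbb{E}[D]^2$; independence of $N$ from $(d_k)$ follows from the memoryless nature of the erasure channel and the fact that the policy resets at each attempt, so nothing in the random-sum argument needs extra justification.
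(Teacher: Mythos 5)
Your proposal is correct and follows essentially the same route as the paper: enumerate elementary outcomes of a single attempt by erasure counts to get $p$, the conditional law of $S$, and the conditional law of $d$, then treat $D$ as a geometric random sum of i.i.d.\ failed-attempt durations. The only cosmetic difference is that you obtain $\mathbb{E}[D^2]$ via Wald plus the law of total variance, whereas the paper expands $\mathbb{E}\bigl[(\sum_{j=1}^{M} d_j)^2\bigr]=\mathbb{E}_M[M\mathbb{E}[d^2]+M(M-1)\mathbb{E}[d]^2]$ directly; both reduce to the same identity.
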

\begin{proof}
To calculate $\mathbb{E}[S]$, recall that a successful transmission of an update requires no termination and therefore, the number of erasures that may happen during the transmission of the $j$-th packet, denoted by $i_j$, is at most $c_{j}-1$. The probability of a successful update transmitting exactly $i_j+1$ times for packet $j$, $j \in [K]$, is given by $(1-\epsilon)^K \epsilon^{\sum_{i_{j}=1}^{K} i_j}$.
Therefore, the probability $p$ that an update is successful follows as 
\begin{align*}
    p&=(1-\epsilon)^K \sum_{i_1=0}^{c_1-1} \sum_{i_2=0} ^{c_2-1} \dots \sum_{i_K=0}^{c_K-1} \epsilon^{\sum_{i_{j}=1}^{K} i_j}\\
    &=(1-\epsilon^{c_1})(1-\epsilon^{c_2})\dots (1-\epsilon^{c_K}). 
\end{align*}
As $S$ corresponds to the amount of time to deliver a successful update, it follows that 
$\text{Pr}\left(S = K +\sum_{j=1}^{K} i_j\right)=~\frac{(1-\epsilon)^K \epsilon^{\sum_{j=1}^{K} i_j}}{p}$, and \eqref{expect_S} and \eqref{expect_S_2} hold. 
Random variable $D$ corresponds to all terminated updates between two successful ones. We write $D=d_1+d_2+...+d_M$,  where $M$ is the random varaible for the number of terminated updates, and $d_j, j \in [M],$ is the number of channel uses of the $j$-th failed update. Note that $d_j$'s are i.i.d random variables, and $M$ is independent of them. It follows that
\begin{align*}
\mathbb{E}[D] &=\mathbb{E}_M[\mathbb{E}[D|M]] = \mathbb{E}_M[\sum_{j=1}^{M} \mathbb{E}[d_j]] \\
&=\mathbb{E}_M[M \mathbb{E}[d]]= \mathbb{E}[M]\mathbb{E}[d]= \frac{1-p}{p} \mathbb{E}[d],
\end{align*}
where the random variable $d$ is the number of channel uses of a terminated update. $M$ has a geometric distribution with probability mass function $\text{Pr}(M=m)= (1-p)^m p$, for $m \geq 0$ and $p$ given by \eqref{successful_update} .
\begin{align*}
   &\mathbb{E}[D^2] = \mathbb{E}_M[(\sum_{j=1}^{M} d_j)^2] = \mathbb{E}_M[M\mathbb{E}[d^2]+ M(M-1) \mathbb{E}[d]^2 ]\\
   &=\frac{1-p}{p} \mathbb{E}[d^2]+ \frac{2(1-p)^2}{p^2} \mathbb{E}[d]^2= \frac{1-p}{p} \mathbb{E}[d^2]+ 2\mathbb{E}[D]^2.
    \end{align*}
Recall that an update is dropped whenever $c_{j}$ erasures happen during the transmission of its $j$-th packet, for any $j\in [K]$. Note that for a given failed update, the probability of the update being terminated during the transmission of its $j$-th packet is given by $\frac{(1-\epsilon)^{j-1} \epsilon^{c_{j}} ( \epsilon^{i_1+i_2+...+i_{j-1}})}{1-p}$ for some $i_h,  h \in [j-1]$, such that $i_h < c_{h}$. The number of channel uses of the considered failed update is  $i_1+i_2+...+i_{j-1} +  c_{j}+ j-1$. 
Consequently, summing over $\{ c_j, i_1, \ldots, i_{j-1} \}$, for $ j \in [K]$, we obtain $\mathbb{E}[d]$ and $\mathbb{E}[d^2]$ as in \eqref{expect_d} and \eqref{expect_d_2}.
\end{proof}
\vspace{-0.1cm}
\section{Average AoI} \label{sec:ave}
In this section, combining results from Lemma \ref{lem1} and Lemma \ref{lem2}, we study the average AoI under several scenarios.
First, we investigate the following two policies:  1) we do not tolerate any erasure during the transmission of a source update, that is, $c_i = 1 $ for $i \in[K]$ and 2) we keep transmitting the current update until all $K$ packets are successfully received, that is, $c_i = \infty $ for $i \in [K]$. We note that the second one is equivalent to the infinite incremental redundancy policy in~\cite{najm2017status,yates2017timely}. Then, we consider the general policy in the regime where the erasure probability is very small. At last, we derive upper and lower bounds of average AoI for any $\epsilon$.
\subsection{AoI with Zero Error Toleration Policy}
The zero error policy corresponding to $c= [1,\ldots,1]$ dictates that whenever an erasure happens during the transmission of the current update, the source drops it and starts transmission of the newest available update. 
\begin{theorem} Under zero error policy, average AoI is given by
\begin{small}
\begin{equation}
\Delta =\frac{3K}{2} + \frac{\frac{1}{(1-\epsilon)^K}-(1+K\epsilon)}{2\epsilon}+\frac{1}{2\epsilon} \frac{\frac{1}{(1-\epsilon)^{2K}}+ \frac{2K\epsilon+\epsilon}{(1-\epsilon)^K}+\epsilon-1}{\frac{1}{(1-\epsilon)^K}-1}.
\label{zero_error_policy}
\end{equation}
\end{small}
\end{theorem}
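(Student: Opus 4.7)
The plan is to specialize Lemma~\ref{lem2} to $\mathbf{c} = [1,1,\ldots,1]$ and then substitute into the average AoI formula of Lemma~\ref{lem1}. Under this policy, a successful update transmission cannot contain any erasure, so $S = K$ is deterministic. Hence $\mathbb{E}[S] = K$, $\mathbb{E}[S^2] = K^2$, and from \eqref{successful_update} the success probability reduces to $p = (1-\epsilon)^K$. This immediately eliminates all the $i_j$-sums in \eqref{expect_S} and \eqref{expect_S_2}, so the only remaining work is in computing the failed-update statistics $\mathbb{E}[D]$ and $\mathbb{E}[D^2]$.

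Next, I would evaluate $\mathbb{E}[d]$ and $\mathbb{E}[d^2]$ from \eqref{expect_d} and \eqref{expect_d_2}. With $c_j = 1$, every inner $i_h$-sum collapses to the single term $i_h = 0$, so a failed update that terminates on packet $j$ uses exactly $j$ channel uses, and has conditional probability $(1-\epsilon)^{j-1}\epsilon / (1-p)$. Thus
\begin{equation*}
\mathbb{E}[d] = \frac{\epsilon}{1-p}\sum_{j=1}^{K} j\,(1-\epsilon)^{j-1}, \qquad \mathbb{E}[d^2] = \frac{\epsilon}{1-p}\sum_{j=1}^{K} j^2 (1-\epsilon)^{j-1}.
\end{equation*}
Both sums are standard finite power-series sums, obtainable by differentiating $\sum_{j=0}^{K} x^j = (1-x^{K+1})/(1-x)$ once or twice at $x = 1-\epsilon$. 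In particular, the first sum yields $(1 - (1-\epsilon)^K(1+K\epsilon))/\epsilon^2$, which after multiplying by $(1-p)/p$ gives a clean closed form $\mathbb{E}[D] = (\frac{1}{(1-\epsilon)^K} - (1+K\epsilon))/\epsilon$ —matching the second term of \eqref{zero_error_policy}.

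I would then plug into Lemma~\ref{lem1}: since $\mathbb{E}[S] + \mathbb{E}[D]$ telescopes nicely to $(\frac{1}{(1-\epsilon)^K} - 1)/\epsilon$, the AoI becomes
\begin{equation*}
\Delta = K + \frac{\tfrac12 K^2 + K\mathbb{E}[D] + \tfrac12\mathbb{E}[D^2]}{\mathbb{E}[S]+\mathbb{E}[D]}.
\end{equation*}
The aim is to split the numerator so that $K + (\tfrac12 K^2 + K\mathbb{E}[D])/(K+\mathbb{E}[D])$ collapses to $\tfrac{3K}{2} + \tfrac12\mathbb{E}[D]$, producing the first two terms of \eqref{zero_error_policy}; the remaining $\tfrac12\mathbb{E}[D^2]/(K+\mathbb{E}[D])$ is what becomes the third term.

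The main obstacle is the bookkeeping for $\mathbb{E}[D^2]$: we have $\mathbb{E}[D^2] = \tfrac{1-p}{p}\mathbb{E}[d^2] + 2\mathbb{E}[D]^2$, and one must combine $\tfrac{1-p}{p}\sum j^2(1-\epsilon)^{j-1}$ with $2\mathbb{E}[D]^2$ and show the result simplifies, after dividing by $(\tfrac{1}{(1-\epsilon)^K}-1)/\epsilon$, to exactly $\tfrac{1}{2\epsilon}\cdot(\tfrac{1}{(1-\epsilon)^{2K}} + \tfrac{2K\epsilon + \epsilon}{(1-\epsilon)^K} + \epsilon - 1)/(\tfrac{1}{(1-\epsilon)^K} - 1)$. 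I expect the $(1-\epsilon)^{2K}$ term to appear from squaring $\mathbb{E}[D]$ in $2\mathbb{E}[D]^2$, while the cross-terms $(2K\epsilon + \epsilon)/(1-\epsilon)^K$ should come from reorganizing the $j^2$-sum using $j^2 = j(j-1) + j$ to match the pieces already accounted for by $\mathbb{E}[D]$ and the residual constant. Once those cancellations are lined up, the final expression follows directly.
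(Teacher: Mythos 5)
Your proposal follows essentially the same route as the paper's proof: specialize Lemma~\ref{lem2} to $\mathbf{c}=[1,\ldots,1]$ so that $p=(1-\epsilon)^K$, $\mathbb{E}[S]=K$, $\mathbb{E}[S^2]=K^2$, evaluate $\mathbb{E}[d]$ and $\mathbb{E}[d^2]$ via differentiated finite geometric sums, and substitute into \eqref{formual}; your closed form for $\mathbb{E}[D]$ matches the paper's exactly. The one slip is in your final regrouping: $K+\bigl(\tfrac12 K^2+K\mathbb{E}[D]\bigr)/\bigl(K+\mathbb{E}[D]\bigr)$ equals $\tfrac{3K}{2}+\tfrac{K\mathbb{E}[D]}{2(K+\mathbb{E}[D])}$, not $\tfrac{3K}{2}+\tfrac12\mathbb{E}[D]$; the split that actually yields the displayed form extracts $\tfrac12\bigl(K+\mathbb{E}[D]\bigr)^2$ from the numerator, giving $\tfrac{3K}{2}+\tfrac12\mathbb{E}[D]+\tfrac{\mathbb{E}[D^2]-\mathbb{E}[D]^2}{2(K+\mathbb{E}[D])}$, so the third term is $\mathrm{Var}(D)/\bigl(2(\mathbb{E}[S]+\mathbb{E}[D])\bigr)$ rather than $\tfrac12\mathbb{E}[D^2]/(K+\mathbb{E}[D])$ --- the two groupings sum to the same total, so correctness is unaffected, but the term-by-term identification you state is not valid.
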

This result is obtained by evaluating the AoI formula~\eqref{formual} for $\mathbf{c}=[1,\ldots,1]$.

\begin{proof}
Setting $c_i = 1$, we obtain $p = (1-\epsilon)^K$,
\begin{align*}
\mathbb{E}[S] &= \frac{(1-\epsilon)^K}{(1-\epsilon)^K} K = K, \quad
 \mathbb{E}[S^2] = \frac{(1-\epsilon)^K}{(1-\epsilon)^K} K^2= K^2,\\
\mathbb{E}[D] 
&=\frac{\frac{1}{(1-\epsilon)^K}-(1+K\epsilon)}{\epsilon},\\
\mathbb{E}[D^2] &=\frac{1}{(1-\epsilon)^K}  \sum\limits_{j=0}^K (1 - \epsilon)^{j-1} \epsilon j^2 + 2\mathbb{E}[D]^2 \\
&=\frac{1}{\epsilon^2} \left(\frac{-2-\epsilon-4K\epsilon}{(1-\epsilon)^K} 
+\frac{2}{(1-\epsilon)^{2K}}
\right. \\
& \left. + (1+\epsilon K)^2 -1+\epsilon \right).
\end{align*}
Substituting the above values in the AoI formula~\eqref{formual}, and after simplifications, one obtains~\eqref{zero_error_policy}.
\end{proof}
\vspace{-0.1cm}
\subsection{AoI with Infinite Error Tolerance}
When $\mathbf{c}=[\infty,\infty, \ldots,\infty]$,  no update is terminated, irrespective of the number erasures during its transmission.
\begin{theorem}
Under infinite error policy, the average AoI is 
\begin{align}
\label{IIReq}
    \Delta = \frac{3K}{2} + \frac{\epsilon(3K+1)}{2(1-\epsilon)}.
\end{align}
\end{theorem}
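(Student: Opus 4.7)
The plan is to invoke Lemma~\ref{lem1} and exploit the fact that, under the infinite error tolerance policy, no update is ever terminated, so the random variable $D$ collecting the durations of failed transmissions between consecutive successes is identically zero. Consequently $\mathbb{E}[D]=\mathbb{E}[D^2]=0$, and the AoI formula~\eqref{formual} collapses to
\begin{equation*}
\Delta = \mathbb{E}[S] + \frac{\mathbb{E}[S^2]}{2\,\mathbb{E}[S]}.
\end{equation*}
This already reduces the problem to computing the first two moments of the successful-update duration $S$.

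Next, I would describe $S$ directly rather than attempting to specialize the rather cumbersome sums in \eqref{expect_S}--\eqref{expect_S_2} at $c_i=\infty$. Since no update is dropped, the number of channel uses $N_j$ needed to deliver the $j$-th packet is geometric with parameter $1-\epsilon$, independently across $j$, and $S=\sum_{j=1}^K N_j$. From the standard geometric moments $\mathbb{E}[N_j]=\tfrac{1}{1-\epsilon}$ and $\mathrm{Var}(N_j)=\tfrac{\epsilon}{(1-\epsilon)^2}$, I get
\begin{equation*}
\mathbb{E}[S]=\frac{K}{1-\epsilon},\qquad \mathbb{E}[S^2]=\mathrm{Var}(S)+\mathbb{E}[S]^2=\frac{K\epsilon+K^2}{(1-\epsilon)^2}.
\end{equation*}

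Finally, I would substitute these into the collapsed formula, obtaining $\Delta=\tfrac{K}{1-\epsilon}+\tfrac{K+\epsilon}{2(1-\epsilon)}=\tfrac{3K+\epsilon}{2(1-\epsilon)}$, and then verify by a short algebraic manipulation that $\tfrac{3K+\epsilon}{2(1-\epsilon)}=\tfrac{3K}{2}+\tfrac{\epsilon(3K+1)}{2(1-\epsilon)}$, which matches \eqref{IIReq}. There is no real obstacle here: the only place one could stumble is convincing oneself that Lemma~\ref{lem2} still applies in the limit $c_i\to\infty$, but since the formula was derived on the event of a successful delivery and $p\to 1$ in that limit, the geometric-sum viewpoint above sidesteps that issue entirely and gives a clean, self-contained derivation.
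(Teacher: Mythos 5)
Your proposal is correct, and its overall skeleton is the same as the paper's: apply Lemma~\ref{lem1}, observe that under $\mathbf{c}=[\infty,\ldots,\infty]$ no update is ever dropped so $p=1$ and $\mathbb{E}[D]=\mathbb{E}[D^2]=0$, and reduce the problem to the first two moments of $S$. Where you differ is in how those moments are obtained. The paper specializes the multi-sums \eqref{expect_S} and \eqref{expect_S_2} of Lemma~\ref{lem2} at $c_i=\infty$ and evaluates the resulting $K$-fold infinite series, arriving at $\mathbb{E}[S]=\tfrac{K}{1-\epsilon}$ and $\mathbb{E}[S^2]=K^2+\tfrac{2K^2\epsilon}{1-\epsilon}+\tfrac{K\epsilon(1+K\epsilon)}{(1-\epsilon)^2}$; you instead write $S=\sum_{j=1}^K N_j$ with $N_j$ i.i.d.\ geometric on $\{1,2,\ldots\}$ with parameter $1-\epsilon$ and read off $\mathbb{E}[S]$ and $\mathrm{Var}(S)$ from standard formulas. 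The two expressions for $\mathbb{E}[S^2]$ agree (both equal $\tfrac{K^2+K\epsilon}{(1-\epsilon)^2}$), and your final arithmetic $\tfrac{K}{1-\epsilon}+\tfrac{K+\epsilon}{2(1-\epsilon)}=\tfrac{3K+\epsilon}{2(1-\epsilon)}$ matches \eqref{IIReq}. Your route is more elementary and self-contained: it avoids manipulating the $K$-fold sums and sidesteps any worry about taking $c_i\to\infty$ inside Lemma~\ref{lem2}, at the cost of re-deriving the distribution of $S$ from scratch rather than reusing the machinery already established for general policies. Both are valid; the geometric-sum viewpoint is arguably the cleaner proof for this special case, while the paper's computation doubles as a consistency check on the general formulas of Lemma~\ref{lem2}.
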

\begin{proof}
Setting $c_i = \infty$, one obtains
\vspace{-0.2cm}
\begin{align*}
 p &=1, \quad  \mathbb{E}[D]=0,  \quad \mathbb{E}[D^2]=0, \\
\mathbb{E}[S]&= \frac{(1-\epsilon)^K}{p}  \Bigg(\sum_{i_1=0}^{\infty}   \dots \sum_{i_K=0} ^{\infty}  (\sum_{j=1}^{K} i_j+K) \epsilon^{\sum_{j=1}^{K} i_j}\Bigg)\\
&= K + (1-\epsilon)^K \frac{K \epsilon}{(1-\epsilon)^{K+1}}= \frac{K}{1-\epsilon}, \\
\mathbb{E}[S^2] &= \frac{(1-\epsilon)^K}{p}  \Bigg(\sum_{i_1=0}^{\infty}   \dots \sum_{i_K=0} ^{\infty}  (\sum_{j=1}^{K} i_j+K)^2 \epsilon^{\sum_{j=1}^{K} i_j}\Bigg) \\
 &=K^2 + 2K \frac{K\epsilon}{1-\epsilon} +  K \epsilon \frac{1+K\epsilon}{(1-\epsilon)^2}.
\end{align*}
Therefore using the formula in \eqref{formual} and some algebraic simplifications one obtains~\eqref{IIReq}.
\end{proof}
\subsection{AoI in the Small Erasure Probability Regime ($\epsilon \to 0$)}
Next, we consider a general policy for channels with small erasure probability. We assume the number of packets per update, $K$, is a constant and independent of $\epsilon$.
 \begin{theorem} \label{epsil0}
When the erasure probability $\epsilon \to 0$, for $K \geq 3$, the optimal average AoI can be achieved with ${c_1}^*={c_2}^*=1$, and any ${c_j}^* > 1, $ for $j \in \{3,4,...,K\}$. Moreover, we have
\begin{align}
    \Delta^{*}= \frac{3K}{2} + \frac{\epsilon}{2K} (3K^2 -2K+3) + o(\epsilon).
    \label{Aoi_low_snr}
\end{align}
\end{theorem}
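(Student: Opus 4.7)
The plan is to Taylor-expand every quantity appearing in Lemmas~\ref{lem1} and~\ref{lem2} to first order in $\epsilon$, and then minimize the resulting $O(\epsilon)$ coefficient of $\Delta$ over all policies. The key structural observation that collapses the optimization is that only the partition of indices into ``$c_j=1$'' versus ``$c_j\geq 2$'' matters at first order: each index with $c_j=1$ injects a term of order $\epsilon$ into the failure-path sums \eqref{expect_d}--\eqref{expect_d_2}, whereas each index with $c_j\geq 2$ only contributes $O(\epsilon^{c_j})=O(\epsilon^2)$ corrections everywhere it appears. Consequently, the problem reduces to choosing an optimal subset $J_1\coloneqq\{j\in[K]:c_j=1\}$, and this is exactly why the theorem allows \emph{any} $c_j^*>1$ for $j\geq 3$.

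Carrying out the expansion, with $k_1\coloneqq|J_1|$, I would verify
\begin{align*}
p &= 1 - k_1\epsilon + O(\epsilon^2),\\
\mathbb{E}[S] &= K + (K-k_1)\epsilon + O(\epsilon^2),\\
\mathbb{E}[S^2] &= K^2 + (K-k_1)(2K+1)\epsilon + O(\epsilon^2),\\
\mathbb{E}[D] &= \epsilon\sum_{j\in J_1} j + O(\epsilon^2),\\
\mathbb{E}[D^2] &= \epsilon\sum_{j\in J_1} j^2 + O(\epsilon^2).
\end{align*}
The $\mathbb{E}[S]$ and $\mathbb{E}[S^2]$ expansions come from isolating the summand with exactly one $i_{j_0}=1$ for a single $j_0\notin J_1$; the $\mathbb{E}[D]$ and $\mathbb{E}[D^2]$ expansions from the summand with $j\in J_1$, $c_j=1$, and all $i_h=0$. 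Substituting into \eqref{formual} and expanding the quotient to first order in $\epsilon$ yields
\begin{equation*}
\Delta = \frac{3K}{2} + \epsilon\left[\frac{3K+1}{2}+\frac{1}{2K}\sum_{j\in J_1}g(j)\right] + o(\epsilon),
\end{equation*}
where $g(j)\coloneqq j^2+Kj-(3K+1)$.

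The optimization is then immediate: include $j$ in $J_1$ precisely when $g(j)<0$. Direct evaluation gives $g(1)=-2K<0$, $g(2)=3-K\leq 0$ for $K\geq 3$ (with equality at $K=3$, consistent with the theorem only claiming that $c_2^*=1$ \emph{can} achieve the optimum), and $g(j)\geq g(3)=8>0$ for every $j\geq 3$. Hence $J_1^*=\{1,2\}$ is optimal. Using $g(1)+g(2)=3-3K$, the bracketed coefficient simplifies to $\frac{3K+1}{2}+\frac{3-3K}{2K}=\frac{3K^2-2K+3}{2K}$, matching \eqref{Aoi_low_snr}. The main obstacle is purely bookkeeping: one must carefully propagate $O(\epsilon^2)$ remainders through the ratio in \eqref{formual}. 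This is routine since the denominator $\mathbb{E}[S]+\mathbb{E}[D]=K+O(\epsilon)$ is bounded away from zero, so the quotient admits a standard geometric-series expansion around $\epsilon=0$.
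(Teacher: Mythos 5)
Your proposal is correct: the expansions of $p$, $\mathbb{E}[S]$, $\mathbb{E}[S^2]$, $\mathbb{E}[D]$, $\mathbb{E}[D^2]$ all match what the paper derives (your $k_1$ is the paper's $K-l$, and your bracketed coefficient $\frac{3K+1}{2}+\frac{1}{2K}\sum_{j\in J_1}g(j)$ is algebraically identical to the coefficient of $\epsilon$ in \eqref{approx0}). Where you genuinely depart from the paper is the optimization step. The paper proceeds in two stages: for fixed $l=K-|J_1|$ it argues the optimal indices with $c_j=1$ are $\{1,\dots,K-l\}$, and then it compares $\Delta(K-2)$ against $\Delta(l)$ for all other $l$ by reducing to the explicit polynomial inequality \eqref{case}--\eqref{step1} in $x=K-l$. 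You instead observe that the first-order coefficient is separable across indices, so the subset $J_1$ is optimized pointwise by including $j$ exactly when $g(j)=j^2+Kj-(3K+1)<0$; since $g$ is increasing in $j$ with $g(1)=-2K$, $g(2)=3-K$, $g(3)=8$, this immediately yields $J_1^*=\{1,2\}$ for $K\geq 3$ and also explains, via $g(2)=0$ at $K=3$, why the theorem only claims the optimum \emph{can} be achieved with $c_2^*=1$. Your route buys a cleaner, casework-free optimization and makes the marginal cost of each index transparent; the paper's route is more pedestrian but makes the comparison of the candidate optima $\Delta(K),\Delta(K-1),\Delta(K-2)$ explicit. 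One small wording caveat: your claim that an index with $c_j\geq 2$ contributes only $O(\epsilon^2)$ ``everywhere it appears'' is literally true only for the failure-path sums and for $p$; such an index does contribute a first-order term to $\mathbb{E}[S]$ and $\mathbb{E}[S^2]$ (through the summand $i_j=1$), but as you correctly note that contribution depends only on membership in $J_1^c$ and not on the value of $c_j$, so the conclusion that only the partition matters at first order stands.
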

\begin{IEEEproof}
When $\epsilon \to 0$, we use a first order approximation and neglect powers of $\epsilon$ greater than $1$.
%
Let  $A~=~\{x_1,x_2,...,x_l \} $ denote the set of all distinct indices in $\{1,\ldots,K\}$ such that $c_{x_i}>1$, for  $i\in \{1,\ldots,l \}$.
The remaining indices in $\{1,\ldots,K \}$ and not in $A $ are denoted by $ \{y_1,y_2,\dots,y_{K-l}\}$.
That is, $c_{y_{i}}=1$ for $i \in \{1,\ldots, K-l \}$. Following some algebraic simplifications, one obtains
\begin{align}
p&\approx  1- \sum_{i=1}^{K} \epsilon^{c_i} \approx 1- \sum_{i=1}^{K-l} \epsilon^{c_{y_{i}}} = 1- \epsilon (K-l),\\
\mathbb{E}[S]&\approx (K+ \epsilon l (K+1)) (1-\epsilon l) = K+ \epsilon l,\\ \mathbb{E}[D]&\approx \epsilon 
(\sum_{i=1}^{K-l} y_i)\nonumber\\
\mathbb{E}[S^2]&\approx (K^2+ \epsilon l (K+1)^2) ((1-\epsilon l))\approx K^2 + \epsilon l (2K+1)\nonumber\\
\mathbb{E}[D^2]&\approx \epsilon (\sum_{i=1}^{K-l} {y_i}^2)+ 2 (\epsilon (\sum_{i=1}^{K-l} y_i))^2\approx \epsilon 
(\sum_{i=1}^{K-l} {y_i}^2)\nonumber\\
\Delta &\approx K+\epsilon l \nonumber\\
&+ \frac{\frac{\epsilon (\sum_{i=1}^{K-l} {y_i}^2)+K^2 + \epsilon l (2K+1)}{2} + (K+ \epsilon l) \epsilon 
(\sum_{i=1}^{K-l} y_i)}{K+ \epsilon l+\epsilon (\sum_{i=1}^{K-l} y_i)} \nonumber\\ 
&= \frac{3K}{2} + \epsilon l + \frac{\epsilon}{2K} (\sum_{i=1}^{K-l} {y_i}^2 + (K+1)l+ K \sum_{i=1}^{K-l} {y_i}).
\label{approx0}
\end{align}
Here $\approx$ means that the left and the right sides have a $o(\epsilon)$ difference. 
We note from~\eqref{approx0} that AoI is a function of $l$ and the indices $y_i$'s, and does not depend on $\{ c_{x_j}, j \in A \}$.

For a fixed $l$, we want to find the indices $y_i$ such that AoI is minimized. From~\eqref{approx0}, it can be seen that $y_i^* = i, i \in [K-l]$. That is, $c_i^* = 1, i \in [K-l]$. By abuse of notation, use $\Delta(l)$ to denote the average AoI if $|A|=l$. Now, we optimize over $l$.
%
%
Substituting $y_i^*$ in $\Delta$, We have $\Delta(K) \approx K(3K+1)$ and 

\vspace{-0.3cm}
\begin{small}
\begin{align*}
 &\Delta(K-1) \approx K(3K+1)-2K, \\ &\Delta(K-2) \approx K(3K+1)-3K+3.
\end{align*}
\end{small}
\vspace{-0.3cm}\\
$\bullet$ \textbf{Case $K =2 $}: it can be checked that $\Delta(K-1)= \Delta( 1) < \min (\Delta(0), \Delta(2))$. In this case, one obtains $\Delta^* = 3 + \frac{5 \epsilon}{ 2 }$\\
$\bullet$ \textbf{Case $K \geq 3$}: one can check that $\Delta(K-2) \le \min (\Delta(K-1), \Delta(K))$. We prove that  $\Delta(K-2) \le  \Delta(l), \text{for } l < K-2$.
This is equivalent to showing that
\begin{align} \label{case}
\small
&\frac{K(K-l)(K-l+1)}{2} + \frac{(K-l)(K-l+1)(2(K-l)+1)}{6}   \nonumber\\ 
 & +  (3K+1)l  \geq K(3K+1)-3K+3.
\end{align}
Defining $x=K-l$, the above is equivalent to proving
\begin{align}
\label{step1}
\frac{Kx(x+1)}{2} + \frac{x (x+1) (2x+1)}{6} + 
3 K \geq (3 K + 1 ) x + 3.
\end{align}
As $x > 2$, we have $\frac{x (x+1) (2x+1)}{6}  \geq x + 3$. To get~\eqref{step1}, it is sufficient to prove 
\begin{align*}
\frac{Kx(x+1)}{2} +3K \geq 3Kx  \iff 
x^2 -5x + 6 \geq 0,  
\end{align*}
which is true for any integer $x \geq 0$. Hence, one obtains AoI as in~\eqref{Aoi_low_snr} after simplification.
%
\end{IEEEproof}
\subsection{AoI Approximation}
In this subsection, we provide a lower and upper bound for optimal average AoI which is verified to be tight in some regimes through numerical results.

\begin{figure} 
    \centering
\hbox{\hspace{-0.45cm}    \includegraphics[scale=0.6]{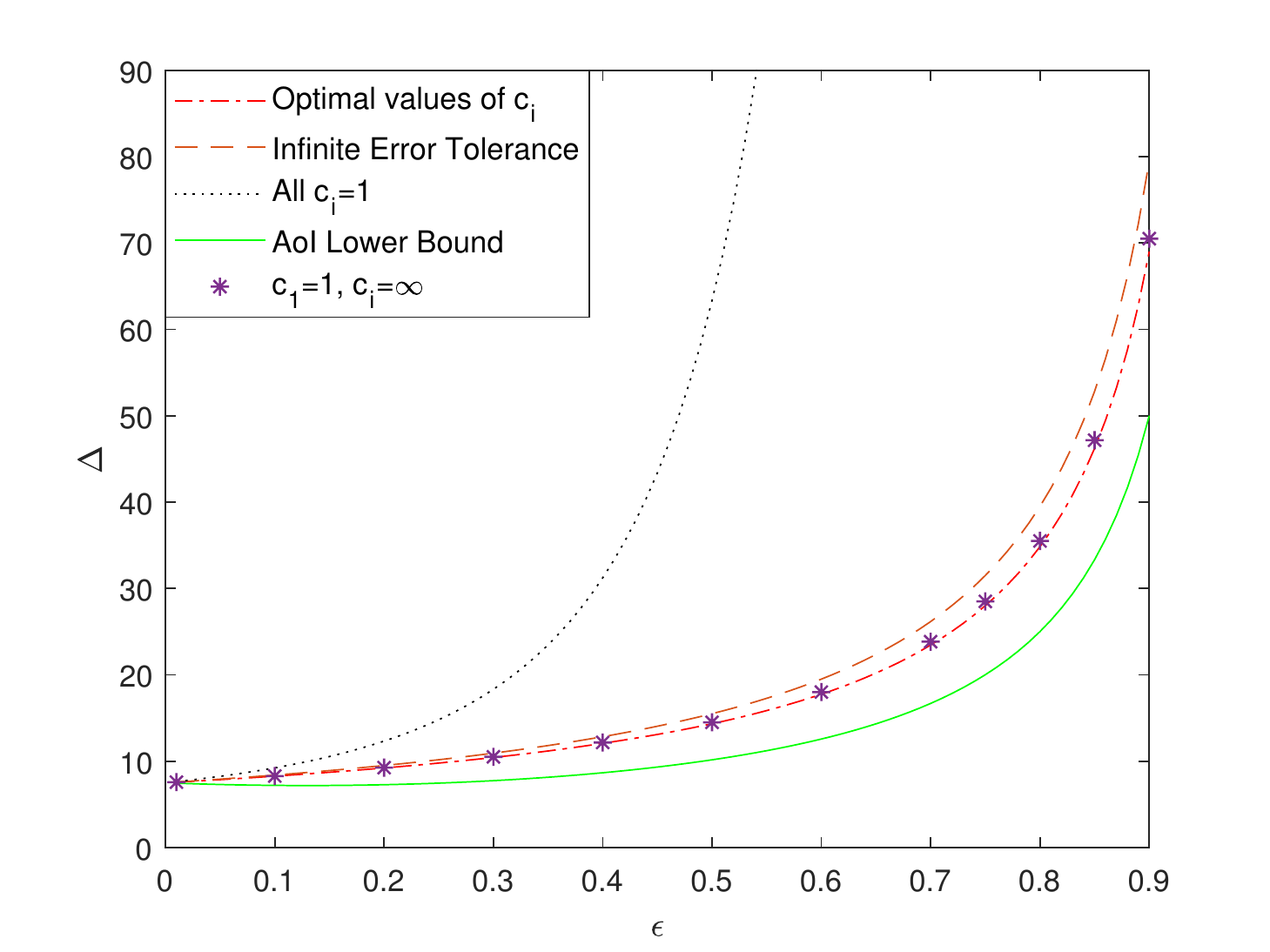} }
    \caption{Comparison of average AoI under different policies when K=5.}
    \label{comparison}
\end{figure}

\begin{lemma}\label{lem_inq}
$\Delta \geq (1+ \frac{p}{2}) (\mathbb{E}[S]+\mathbb{E}[D])$. 
\end{lemma}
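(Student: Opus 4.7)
The plan is to plug the expression for $\Delta$ from Lemma~\ref{lem1} into the desired inequality and reduce it to a clean convexity statement about a mixture random variable.

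First I would rearrange $\Delta \geq (1+p/2)(\mathbb{E}[S]+\mathbb{E}[D])$ by subtracting $\mathbb{E}[S]$ from both sides, multiplying through by the positive quantity $\mathbb{E}[S]+\mathbb{E}[D]$, and canceling the cross-term $\mathbb{E}[S]\mathbb{E}[D]$. Concretely, the target reduces to
$$\tfrac{1}{2}\mathbb{E}[S^2] + \tfrac{1}{2}\mathbb{E}[D^2] \;\geq\; \mathbb{E}[D]^2 + \tfrac{p}{2}\bigl(\mathbb{E}[S]+\mathbb{E}[D]\bigr)^2.$$

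Next I would invoke the two identities from Lemma~\ref{lem2}, namely $\mathbb{E}[D] = \tfrac{1-p}{p}\mathbb{E}[d]$ and $\mathbb{E}[D^2] = \tfrac{1-p}{p}\mathbb{E}[d^2] + 2\mathbb{E}[D]^2$, to eliminate the $D$-moments. These give $\mathbb{E}[D^2] - 2\mathbb{E}[D]^2 = \tfrac{1-p}{p}\mathbb{E}[d^2]$ and $\mathbb{E}[S]+\mathbb{E}[D] = \tfrac{1}{p}\bigl(p\mathbb{E}[S] + (1-p)\mathbb{E}[d]\bigr)$. After multiplying through by $p$, the inequality collapses to
$$p\,\mathbb{E}[S^2] + (1-p)\,\mathbb{E}[d^2] \;\geq\; \bigl(p\,\mathbb{E}[S] + (1-p)\,\mathbb{E}[d]\bigr)^2.$$

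The main step is to recognize the last inequality as the non-negativity of a variance. Let $Y$ be the mixture random variable that equals $S$ with probability $p$ and equals an independent copy of $d$ with probability $1-p$. Then the left-hand side equals $\mathbb{E}[Y^2]$ and the right-hand side equals $(\mathbb{E}[Y])^2$, so the bound is exactly Jensen's inequality applied to the convex function $x \mapsto x^2$ (equivalently, $\operatorname{Var}(Y)\geq 0$).

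I do not foresee a significant obstacle: once the reduction to the mixture form is made, the bound is immediate. The only delicate step is tracking the algebra when substituting the $\mathbb{E}[D]$ and $\mathbb{E}[D^2]$ identities, since it is easy to drop a factor of $p$ there; and one should note in passing that multiplying by $\mathbb{E}[S]+\mathbb{E}[D]$ is legitimate because $\mathbb{E}[S]\geq K>0$.
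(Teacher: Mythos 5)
Your proof is correct, and the reduction you perform is the same one the paper uses: both arguments substitute $\mathbb{E}[D]=\frac{1-p}{p}\mathbb{E}[d]$ and $\mathbb{E}[D^2]=\frac{1-p}{p}\mathbb{E}[d^2]+2\mathbb{E}[D]^2$ into \eqref{formual} to isolate the term $\frac{1}{2}\bigl(\mathbb{E}[S^2]+\frac{1-p}{p}\mathbb{E}[d^2]\bigr)/(\mathbb{E}[S]+\mathbb{E}[D])$. Where you diverge is in the final inequality: the paper first applies $\mathbb{E}[X^2]\geq \mathbb{E}[X]^2$ separately to $S$ and $d$, and then invokes the Cauchy--Schwarz inequality $(a+b)(c+d)\geq(\sqrt{ac}+\sqrt{bd})^2$ with weights $1$ and $\frac{1-p}{p}$ to reach $p(\mathbb{E}[S]+\mathbb{E}[D])^2$; you instead collapse both steps into a single application of Jensen to the mixture variable $Y$ that equals $S$ with probability $p$ and $d$ with probability $1-p$, i.e.\ $\operatorname{Var}(Y)\geq 0$. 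The two are algebraically equivalent --- the paper's two-step chain is exactly the law-of-total-variance decomposition of $\operatorname{Var}(Y)$ into its within-component and between-component parts --- but your packaging is more conceptual: $Y$ is simply the duration of a generic update attempt, which makes the bound transparent and avoids the somewhat ad hoc Cauchy--Schwarz step. One cosmetic slip: to pass from $\frac{1}{2}\mathbb{E}[S^2]+\frac{1-p}{2p}\mathbb{E}[d^2]\geq\frac{p}{2}(\mathbb{E}[S]+\mathbb{E}[D])^2$ to your displayed mixture inequality you multiply by $2p$, not by $p$; the displayed result is nonetheless correct.
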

\begin{proof} Using Lemmas \ref{lem1} and \ref{lem2},
\begin{align*}
\Delta &= \mathbb{E}[S]+\frac{\frac{\mathbb{E}[S^2]}{2} + \frac{\mathbb{E}[D^2]}{2}+\mathbb{E}[S]\mathbb{E}[D]}{\mathbb{E}[S]+\mathbb{E}[D]} \\&=
\mathbb{E}[S]+\mathbb{E}[D]+\frac{1}{2}\frac{\mathbb{E}[S^2] + \frac{1-p}{p} \mathbb{E}[d^2]}{\mathbb{E}[S]+\mathbb{E}[D]}
\end{align*}
Knowing that $\mathbb{E}[X^2]\geq \mathbb{E}[X]^2$, we have
\begin{align*}
 \mathbb{E}[S^2] + \frac{1-p}{p} \mathbb{E}[d^2]   \geq \mathbb{E}[S]^2 + \frac{1-p}{p} \mathbb{E}[d]^2.
\end{align*}
From Cauchy-Schwartz inequality we know that $(a+b)(c+d) \geq (\sqrt{ac}+\sqrt{bd})^2$ for positive values of $a,b,c,d$. Therefore,
\begin{align*}
    (1+ \frac{1-p}{p})(\mathbb{E}[S]^2 + \frac{1-p}{p} \mathbb{E}[d]^2) \geq  (\mathbb{E}[S]+\mathbb{E}[D])^2.
\end{align*}
Consequently we obtain
\begin{small}
\begin{align*}
    \Delta \geq \mathbb{E}[S]+\mathbb{E}[D] + \frac{p}{2} (\mathbb{E}[S]+\mathbb{E}[D]) =
    (1+ \frac{p}{2}) (\mathbb{E}[S]+\mathbb{E}[D]).
\end{align*}
\end{small}
\end{proof}
\begin{lemma} \label{rec}
$\mathbb{E}[S]+\mathbb{E}[D] \geq \frac{K}{1-\epsilon}$.
\end{lemma}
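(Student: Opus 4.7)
The plan is to interpret $\mathbb{E}[S]+\mathbb{E}[D]$ as the expected length of one renewal cycle, i.e., the expected number of channel uses between two consecutive successfully delivered updates (including all intervening terminated updates plus the final successful one), and then to lower-bound this cycle length by counting successful packet receptions inside the cycle.

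Concretely, I would let $X$ denote the random length of one cycle, so that $\mathbb{E}[X]=\mathbb{E}[S]+\mathbb{E}[D]$. Introduce Bernoulli indicators $Z_1,Z_2,\ldots$, where $Z_t=1$ if the $t$-th channel use within the cycle is a successful (non-erased) transmission and $Z_t=0$ otherwise; by assumption these are i.i.d.\ with $\mathbb{E}[Z_t]=1-\epsilon$, independent of the policy's past decisions. Let $Y=\sum_{t=1}^X Z_t$ be the total number of successful packet receptions in the cycle. The key structural observation is that $Y\ge K$ with probability $1$: the last update of the cycle is by definition successful, which requires $K$ of its packets to be received, so at least $K$ of the $Z_t$'s in the cycle must equal one (packets belonging to earlier, terminated updates only add more successes).

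Next I would apply Wald's identity. Since $X$ is a stopping time with respect to the filtration generated by $(Z_t)$ together with the policy's decisions, and since the $Z_t$ are i.i.d.\ Bernoulli($1-\epsilon$) independent of the decisions, Wald gives
\begin{equation*}
\mathbb{E}[Y] \;=\; (1-\epsilon)\,\mathbb{E}[X] \;=\; (1-\epsilon)\bigl(\mathbb{E}[S]+\mathbb{E}[D]\bigr).
\end{equation*}
Combining with $\mathbb{E}[Y]\ge K$ immediately yields $\mathbb{E}[S]+\mathbb{E}[D]\ge K/(1-\epsilon)$, as claimed.

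The step I expect to be the most delicate is the Wald / independence justification: one must be sure that even though the policy decides when to terminate based on the observed erasure pattern, the underlying per-channel-use erasure indicators remain i.i.d.\ Bernoulli and the cycle length $X$ is a bona fide stopping time with $\mathbb{E}[X]<\infty$ (finiteness follows because the cycle length is dominated by a geometric number of terminated updates, each of bounded expected length, using $p>0$ from \eqref{successful_update}). If one prefers to avoid invoking Wald, the same identity $\mathbb{E}[Y]=(1-\epsilon)\mathbb{E}[X]$ can be recovered by linearity: write $Y=\sum_{t\ge 1} Z_t\mathbf{1}\{t\le X\}$, and note that $\{t\le X\}$ is determined by $Z_1,\ldots,Z_{t-1}$ and the policy's past actions, hence independent of $Z_t$, so $\mathbb{E}[Z_t\mathbf{1}\{t\le X\}]=(1-\epsilon)\Pr(t\le X)$, and summing over $t$ gives $(1-\epsilon)\mathbb{E}[X]$.
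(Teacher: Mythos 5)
Your proof is correct, but it takes a genuinely different route from the paper. The paper proves the bound by a recursion on the number of packets per update: it defines $X_i=\mathbb{E}[S(i)]+\mathbb{E}[D(i)]$ for the truncated policy $[c_1,\ldots,c_i]$, derives the exact identity $X_{i+1}=\frac{1}{1-\epsilon}+\frac{X_i}{1-\epsilon^{c_{i+1}}}$ from the recursions \eqref{recur}--\eqref{recur2}, unrolls it to \eqref{key_r}, and reads off $X_K\ge \frac{K}{1-\epsilon}$ because all the discarded terms are nonnegative. Your argument instead treats $\mathbb{E}[S]+\mathbb{E}[D]$ as the expected renewal-cycle length, observes that every cycle must contain at least $K$ unerased channel uses (the $K$ packets of the terminating successful update), and applies Wald's identity to the i.i.d.\ Bernoulli$(1-\epsilon)$ success indicators; the stopping-time and finiteness justifications you flag are the right ones and they do go through, since the termination decisions are functions of the past erasure pattern and $\mathbb{E}[S]+\mathbb{E}[D]<\infty$ for $p>0$. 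Your route is shorter, more conceptual, and makes transparent \emph{why} the bound holds for any policy in this family (indeed for any policy that only ends a cycle upon full delivery), whereas the paper's recursion requires the machinery of Lemma~\ref{lem2} but pays for itself by producing the exact expression \eqref{key_r}, which is reused later in the proof of Theorem~\ref{lower_8}.
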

\begin{proof}
We utilize a recursive approach. 
Given a vector $\mathbf{c}=[c_1,c_2,...,c_{K}]$, we consider the policy $[c_1,c_2,...,c_{i}]$ for updates consisting of $i$ packets. For the vector $\mathbf{c}$, define $S(i), D(i), PAoI_i$ to be the successful update transmission time, the failed update transmission time, and the peak AoI, respectively, when each update consists of $i$ packets. Thus 
\begin{align} \label{recur}
&\mathbb{E}[S(i+1)] = \mathbb{E}[S(i)] + 1 + \frac{\epsilon}{1-\epsilon} - \frac{c_{i+1} \epsilon^{c_{i+1}}}{1-\epsilon^{c_{i+1}}},\\
 &\mathbb{E}[D(i+1)] =  \frac{\mathbb{E}[D(i)] + \epsilon^{c_{i+1}} (c_{i+1}+ \mathbb{E}[S(i)])}{1-\epsilon^{c_{i+1}}}. \label{recur2}
 \end{align}
We notice that
\begin{align*}
    \mathbb{E}[S(i+1)]+\mathbb{E}[D(i+1)]= \frac{1}{1-\epsilon} + \frac{\mathbb{E}[S(i)]+\mathbb{E}[D(i)]}{1-\epsilon^{c_{i+1}}}. 
\end{align*}
Defining $X_{i}=\mathbb{E}[S_i]+\mathbb{E}[D_i]$, we obtain
\begin{align} \label{key_r}
     X_{j}= \frac{j}{1-\epsilon} + \sum_{i=1}^{j-1} \frac{\epsilon^{c_{i+1}}}{1-\epsilon^{c_{i+1}}} X_{i}, \quad j \in \{2,...,K\}.
\end{align}
Clearly from equation \eqref{key_r}, one can obtain $X_K \geq \frac{K}{1-\epsilon}$ considering that $X_i \geq 0$ for $i \in \{1,...,K\}$.
\end{proof}
\begin{theorem}
The optimal value of average AoI satisfies the following bounds for any value of $\epsilon$ and $K$:
\begin{align}
\frac{K}{1-\epsilon}\left(1+\frac{(1-\epsilon)^K}{2}\right)
\le \Delta^{*} 
\le \frac{K}{1-\epsilon} \left(\frac{3}{2}+ \frac{\epsilon}{2K}\right). 
\end{align}
\end{theorem}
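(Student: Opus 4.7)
The plan is to derive the two bounds separately, each as a short combination of results already established.

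For the lower bound, I will chain Lemma~\ref{lem_inq} with Lemma~\ref{rec}. Lemma~\ref{lem_inq} gives $\Delta \geq (1+\frac{p}{2})(\mathbb{E}[S]+\mathbb{E}[D])$ for every admissible policy $\mathbf{c}$, and Lemma~\ref{rec} gives $\mathbb{E}[S]+\mathbb{E}[D] \geq \frac{K}{1-\epsilon}$. The only extra ingredient is a uniform lower bound on $p$ in terms of $\epsilon$ and $K$ that holds for \emph{every} policy. From \eqref{successful_update}, $p=\prod_{i=1}^{K}(1-\epsilon^{c_i})$, and since $c_i\ge 1$ implies $\epsilon^{c_i}\le \epsilon$, we get $p \geq (1-\epsilon)^K$. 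Substituting yields
\begin{equation*}
\Delta \;\geq\; \left(1+\frac{(1-\epsilon)^K}{2}\right)\frac{K}{1-\epsilon},
\end{equation*}
and since this holds for all $\mathbf{c}$, it holds for the minimizer, giving the lower bound on $\Delta^{*}$.

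For the upper bound, I would simply observe that $\Delta^{*}$ is the minimum over all policies, so $\Delta^{*} \le \Delta(\mathbf{c})$ for any specific choice of $\mathbf{c}$. Taking the infinite error tolerance policy $\mathbf{c}=[\infty,\ldots,\infty]$, the previously computed expression \eqref{IIReq} gives $\Delta = \frac{3K}{2}+\frac{\epsilon(3K+1)}{2(1-\epsilon)}$. A short algebraic rearrangement, $\frac{3K(1-\epsilon)+\epsilon(3K+1)}{2(1-\epsilon)} = \frac{3K+\epsilon}{2(1-\epsilon)} = \frac{K}{1-\epsilon}\bigl(\frac{3}{2}+\frac{\epsilon}{2K}\bigr)$, shows this matches the claimed right-hand side exactly.

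The only conceptually non-routine step is the policy-independent bound $p \geq (1-\epsilon)^K$; once that is in hand, both directions follow immediately by citing Lemmas~\ref{lem_inq} and~\ref{rec} for the lower bound, and by plugging the infinite tolerance policy into~\eqref{IIReq} for the upper bound. I do not anticipate any real obstacle; the main thing to double-check is that the minimization over $\mathbf{c}$ is compatible with the worst-case $p$ used in the lower bound (it is, because $(1+\frac{p}{2})\frac{K}{1-\epsilon}$ is increasing in $p$, so replacing $p$ by its smallest possible value preserves the inequality $\Delta \geq \cdots$).
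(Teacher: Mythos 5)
Your proposal is correct and follows essentially the same route as the paper: the lower bound is obtained by combining Lemma~\ref{lem_inq}, Lemma~\ref{rec}, and the policy-independent bound $p \geq (1-\epsilon)^K$, and the upper bound follows from evaluating the infinite-error-tolerance policy via~\eqref{IIReq}. Your explicit algebraic verification that $\frac{3K}{2}+\frac{\epsilon(3K+1)}{2(1-\epsilon)} = \frac{K}{1-\epsilon}\bigl(\frac{3}{2}+\frac{\epsilon}{2K}\bigr)$ and the monotonicity-in-$p$ remark are small additions the paper leaves implicit, but the argument is the same.
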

\begin{proof}
The upper bound is derived when choosing the policy vector $\mathbf{c}$ to be $[\infty,\infty,...,\infty]$. For the lower bound, combining resullts from Lemma \ref{lem_inq} and Lemma \ref{rec} and also knowing $p \geq (1-\epsilon)^K$ imply
\begin{align*}
    \Delta^{*} \geq (1+ \frac{p}{2}) (\mathbb{E}[S]+\mathbb{E}[D]) \geq \frac{K}{1-\epsilon}(1+\frac{(1-\epsilon)^K}{2}).
\end{align*}
\end{proof}
In Figure \ref{comparison} we compare average AoI of different policies. It can be seen that when $\epsilon$ is small the lower bound is tight. It is also clear that the policy $\mathbf{c}=[\infty,\infty, \ldots,\infty]$ outperforms the zero error toleration policy and is close to the optimal policy. Furthermore, because  $c_1^*=1$ for an optimal policy, the policy $\mathbf{c} = [ 1, \infty,\ldots,\infty]$ tightly compares to the optimal policy for average AoI.

\section{Minimizing peak age of information} \label{sec:peak}
We derive peak average age of information (PAoI) in special cases in this section. Moreover, we derive bounds on the optimal $c_i^*$ minimizing PAoI.

For both AoI and PAoI, we observe through numerical simulation that values of $c_i^*$'s are increasing as $i$ increases. Intuitively, when we are closer to receiving the update fully, the amount of tolerated erasures should increase. We prove the correctness of the intuition when minimizing PAoI. 
\begin{theorem} \label{increasing}
The optimal policy $\mathbf{c}^*$ minimizing PAoI satisfies 
${c_1}^* \leq {c_2}^* \leq ... \leq {c_K}^*$.
\end{theorem}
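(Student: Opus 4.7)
The plan is to use an adjacent-transposition (swap) argument. Suppose, for contradiction, that the optimal policy $\mathbf{c}^*$ satisfies $c_i^* > c_{i+1}^*$ for some $i \in [K-1]$. Let $\mathbf{c}'$ denote the policy obtained from $\mathbf{c}^*$ by swapping these two entries. I will argue that $PAoI(\mathbf{c}') < PAoI(\mathbf{c}^*)$, contradicting the optimality of $\mathbf{c}^*$.

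The first step is to notice that $\mathbb{E}[S]$ is invariant under such a swap. Unrolling the recursion \eqref{recur} from $\mathbb{E}[S(0)]=0$ gives the explicit formula $\mathbb{E}[S(K)] = \frac{K}{1-\epsilon} - \sum_{j=1}^K \frac{c_j\epsilon^{c_j}}{1-\epsilon^{c_j}}$, which is symmetric in $(c_1,\ldots,c_K)$. The same calculation shows that $\mathbb{E}[S(j)]$ is unchanged by the swap for every $j\ge i+1$ and trivially for $j<i$. Hence the entire PAoI comparison reduces to comparing $\mathbb{E}[D(K)]$ and $\mathbb{E}[D'(K)]$.

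The core computation is the effect of the swap on $\mathbb{E}[D(i+1)]$. Writing $q_j=1-\epsilon^{c_j}$, $r_j=\epsilon^{c_j}/(1-\epsilon^{c_j})$, and $A=1/(1-\epsilon)$, I will expand $\mathbb{E}[D(i+1)]$ via \eqref{recur2} two steps backward in both policies. The leading term $\mathbb{E}[D(i-1)]/(q_iq_{i+1})$ is identical in both, and the remaining terms can be simplified using the identity $1/q_k=1+r_k$. After the $r_ir_{i+1}$ cross terms cancel, I expect a clean identity
\begin{align*}
\mathbb{E}[D(i+1)]-\mathbb{E}[D'(i+1)] \;=\; A\bigl(r_{i+1}-r_i\bigr)\;=\;\frac{r_{i+1}-r_i}{1-\epsilon}.
\end{align*}
Because $r_j$ is strictly decreasing in $c_j$, the assumption $c_i^*>c_{i+1}^*$ gives $r_i<r_{i+1}$, so the right-hand side is strictly positive.

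Finally, I will propagate this local gain to index $K$. For every $j\ge i+1$, both $c_{j+1}$ and $\mathbb{E}[S(j)]$ are unaffected by the swap, so applying \eqref{recur2} yields $\mathbb{E}[D(j+1)]-\mathbb{E}[D'(j+1)]=(\mathbb{E}[D(j)]-\mathbb{E}[D'(j)])/q_{j+1}$. Iterating,
\begin{align*}
\mathbb{E}[D(K)]-\mathbb{E}[D'(K)] \;=\; \bigl(\mathbb{E}[D(i+1)]-\mathbb{E}[D'(i+1)]\bigr)\prod_{k=i+2}^{K}\frac{1}{q_k} \;>\;0.
\end{align*}
Combined with $\mathbb{E}[S]=\mathbb{E}[S']$ and Lemma \ref{lem1}, this yields $PAoI(\mathbf{c}^*)>PAoI(\mathbf{c}')$, a contradiction. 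The main obstacle is the intermediate algebraic simplification leading to the identity $A(r_{i+1}-r_i)$; the rest is essentially bookkeeping enabled by the recursions already established in Lemma \ref{rec}.
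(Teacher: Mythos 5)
Your proposal is correct, and the pivotal identity checks out: expanding \eqref{recur2} two steps back, the terms involving $\mathbb{E}[S(i-1)]$ and the constant terms all cancel (using $\frac{\epsilon^{c_b}}{1-\epsilon^{c_b}}-\frac{\epsilon^{c_a}}{1-\epsilon^{c_a}}=\frac{\epsilon^{c_b}-\epsilon^{c_a}}{(1-\epsilon^{c_a})(1-\epsilon^{c_b})}$), leaving exactly $\mathbb{E}[D(i+1)]-\mathbb{E}[D'(i+1)]=\frac{1}{1-\epsilon}\left(r_{i+1}-r_i\right)>0$ when $c_i^*>c_{i+1}^*$. Your overall strategy, an adjacent transposition combined with the permutation invariance of $\mathbb{E}[S]$ so that only $\mathbb{E}[D]$ needs comparing, is the same as the paper's. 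The computational engine is genuinely different, though: the paper works with the explicit multi-sum expression for $\mathbb{E}[d]$ from Lemma \ref{lem2}, isolates the two summands $L_{\mathbf{c}},L_{\mathbf{c}'}$ that change under the swap, and after a fairly heavy term-by-term cancellation obtains a difference proportional to $\epsilon^{c_{j+1}}-\epsilon^{c_j}$; you instead route everything through the recursions \eqref{recur} and \eqref{recur2} of Lemma \ref{rec}, which localizes the entire effect of the swap at step $i+1$ and then propagates it to $K$ via the multiplicative relation $\mathbb{E}[D(j+1)]-\mathbb{E}[D'(j+1)]=\bigl(\mathbb{E}[D(j)]-\mathbb{E}[D'(j)]\bigr)/(1-\epsilon^{c_{j+1}})$. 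Your route is shorter and makes positivity of the final difference transparent (a positive local gain multiplied by positive factors), whereas the paper's direct computation obscures this behind the algebra; the two differ in bookkeeping rather than substance. One point worth stating explicitly in a polished write-up: $\mathbb{E}[S(i)]$ itself \emph{does} change under the swap, but this is harmless because the propagation from $i+1$ to $K$ and the final expression $PAoI=2\mathbb{E}[S(K)]+\mathbb{E}[D(K)]$ only involve $\mathbb{E}[S(j)]$ for $j\ge i+1$, which, as you correctly observe from the symmetric closed form of the unrolled recursion, are invariant.
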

\begin{proof}
Consider the vector $\mathbf{c}^*=[{c_1}^*,{c_2}^*,...,{c_K}^*]$ that minimizes PAoI. Assume $j+1$ is the smallest index such that ${c_{j+1}}^*< {c_{j}}^*$. We prove that vector $\mathbf{c}^\prime=[{c_1}^*,{c_2}^*,...,{c_{j-1}}^*,{c_{j+1}}^*, {c_{j}}^*,{c_{j+2}}^*,...,{c_K}^*]$ achieves a smaller PAoI compared to $\mathbf{c}^*$, which is in contradiction to the assumption that $\mathbf{c}^*$ is optimal. 
Therefore, the assumption that we have the condition of ${c_{j+1}}^*< {c_{j}}^*$ somewhere in vector $\mathbf{c}^*$ is wrong. Consequently, for the optimal vector $\mathbf{c}^*$ we have ${c_{1}}^* \leq {c_2}^* \leq ... \leq {c_K}^*$. As we can see from the expression of $\mathbb{E}[S]$ in \eqref{expect_S}, it is indifferent of any permutation in the vector $\mathbf{c}$ and therefore remains the same for $\mathbf{c}^{\prime}$. Hence for comparing PAoI of vectors $\mathbf{c}$ and $\mathbf{c}^{\prime}$ we just have to compare $\mathbb{E}[D]$.
We have only changed the position of $2$ adjacent parameters $c_j$ and $c_{j+1}$ (not their value), 
One can show that the difference of the 2 PAoI of $\mathbf{c}$ and $\mathbf{c}^{\prime}$ is only left with $2$ terms, and PAoI of  $\mathbf{c}^{\prime}$ is smaller. 
We indicate this $2$ terms by $L_\mathbf{c}, L_{\mathbf{c}^{\prime}}$ and the goal is to show $L_{\mathbf{c}^{\prime}} < L_\mathbf{c}$.

\vspace{-0.1cm}
\begin{small}
\begin{align*}
  &L_\mathbf{c}= (1-\epsilon)^{j-1} \epsilon^{c_j} \sum_{i_1=0}^{c_1-1}\dots \sum_{i_{j-1}=0}^{c_{j-1}-1} (\sum_{r=1}^{j-1} i_r +c_j +j-1) \epsilon^{\sum_{r=1}^{j-1}}  \\
  &+(1-\epsilon)^{j} \epsilon^{c_{j+1}} \sum_{i_1=0}^{c_1-1}\dots  \sum_{i_{j-1}=0}^{c_{j-1}-1} \sum_{i_j=0}^{c_{j}-1} (\sum_{r=1}^{j-1}+i_j+ i_r +c_{j+1} +j) \epsilon^{\sum_{r=1}^{j}}\\
  &+(c_{j+1}+j) \epsilon^{c_{j+1}} (1-\epsilon^{c_1})\dots(1-\epsilon^{c_{j-1}})(1-\epsilon^{c_{j}}),\\
  &L_{\mathbf{c}^{\prime}}= (1-\epsilon)^{j-1} \epsilon^{c_{j+1}} \sum_{i_1=0}^{c_1-1}\dots \sum_{i_{j-1}=0}^{c_{j-1}-1} (\sum_{r=1}^{j-1} i_r +c_{j+1} +j-1) \epsilon^{\sum_{r=1}^{j-1}}  \\
  &+(1-\epsilon)^{j} \epsilon^{c_{j}} \sum_{i_1=0}^{c_1-1}\dots \sum_{i_{j-1}=0}^{c_{j-1}-1} \sum_{i_{j+1}=0}^{c_{j+1}-1} (\sum_{r=1}^{j-1}+i_{j+1}+ i_r +c_{j} +j) \epsilon^{\sum_{r=1}^{j}}\\
  &+(c_{j}+j) \epsilon^{c_{j}} (1-\epsilon^{c_1})\dots(1-\epsilon^{c_{j-1}})(1-\epsilon^{c_{j+1}}).
 \end{align*}
\end{small}
Let's define 
\begin{align*}
    A&=\sum_{i_1=0}^{c_1-1}\dots \sum_{i_{j-1}=0}^{c_{j-1}-1} (\sum_{r=1}^{j-1} i_r) \epsilon^{\sum_{r=1}^{j-1}},\\
    p_1&= (1-\epsilon^{c_1})\dots (1-\epsilon^{c_{j-1}}),\\
    x&= \sum_{i_{j}=0}^{c_{j}-1} i_{j} \epsilon^{i_{j}} = \frac{(c_{j}-1)\epsilon^{c_{j}+1}-c_j\epsilon^{c_j}+\epsilon}{(1-\epsilon)^2},\\
    y&= \sum_{i_{j}=0}^{c_{j+1}-1} i_{j+1} \epsilon^{i_{j+1}}=\frac{(c_{j+1}-1)\epsilon^{c_{j+1}+1}-c_{j+1}\epsilon^{c_{j+1}}+\epsilon}{(1-\epsilon)^2}.
\end{align*}
Then by expanding the summations and using $A$,

\vspace{-0.1cm}
\begin{footnotesize}
\begin{align*}
   & L_\mathbf{c}= (1-\epsilon)^{j-1} \epsilon^{c_j} A + (c_j+j-1)\epsilon^{c_j}(1-\epsilon^{c_1})\dots (1-\epsilon^{c_{j-1}})\\&+ (1-\epsilon)^{j} \epsilon^{c_{j+1}} (\frac{1-\epsilon^{c_1}}{1-\epsilon} \times \dots \times \frac{1-\epsilon^{c_{j-1}}}{1-\epsilon} (\sum_{i_{j}=0}^{c_{j}-1} (i_{j} \epsilon^{i_{j}}))+A \frac{1-\epsilon^{c_j}}{1-\epsilon})\\
   &+(c_{j+1}+j) \epsilon^{c_{j+1}} (1-\epsilon^{c_1})\dots(1-\epsilon^{c_{j-1}})(1-\epsilon^{c_{j}}),\\
   & L_{\mathbf{c}^{\prime}}= (1-\epsilon)^{j-1} \epsilon^{c_{j+1}} A + (c_{j+1}+j-1)\epsilon^{c_{j+1}}(1-\epsilon^{c_1})\dots (1-\epsilon^{c_{j-1}})\\&+ (1-\epsilon)^{j} \epsilon^{c_{j}} (\frac{1-\epsilon^{c_1}}{1-\epsilon} \times \dots \times \frac{1-\epsilon^{c_{j-1}}}{1-\epsilon} (\sum_{i_{j+1}=0}^{c_{j+1}-1} (i_{j+1} \epsilon^{i_{j+1}}))+A \frac{1-\epsilon^{c_{j+1}}}{1-\epsilon})\\
   &+(c_{j}+j) \epsilon^{c_{j}} (1-\epsilon^{c_1})\dots(1-\epsilon^{c_{j-1}})(1-\epsilon^{c_{j+1}}).
\end{align*}
\end{footnotesize}
With cancelling out mutual factors and using $[A,p_1,x,y]$ abbreviations we obtain
\begin{align*}
    &  L_\mathbf{c} - L_{\mathbf{c}^{\prime}} = (c_j+j-1) \epsilon^{c_{j}} p_1 + x(1-\epsilon)\epsilon^{c_{j+1}}p_1 \\
    & + (c_{j+1}+j) \epsilon^{c_{j+1}} (1-\epsilon^{c_j})p_1 \\
    & - (c_{j+1}+j-1) \epsilon^{c_{j+1}} p_1 - y(1-\epsilon)\epsilon^{c_j}p_1 \\
    & - (c_{j}+j) \epsilon^{c_{j}} (1-\epsilon^{c_{j+1}})p_1.
\end{align*}
By factoring out $p_1$ from each term, substituting $x$ and $y$ formula and some algebraic simplification 
\begin{align*}
    & L_\mathbf{c} - L_{\mathbf{c}^{\prime}} = p_1 (\epsilon^{c_{j+1}}-\epsilon^{c_{j}}) > 0.
\end{align*}
The last equation is because we assumed $c_{j+1}< c_{j}$ and therefore $\epsilon^{c_{j+1}}-\epsilon^{c_{j}} >0$. Thus the proof is complete.
\end{proof}
\subsection{PAoI in the Small Erasure Probability Regime ($\epsilon \to 0$)}
\begin{theorem}
 Minimum PAoI when $\epsilon \to 0$ equals to $2K+ \epsilon (2K-1)$ and optimal values of $c_i$
 that lead to minimum PAoI are $c_1^*=1$ and any $c_j^* > 1$ for $ j \in\{2,3,...,K\}$.
\end{theorem}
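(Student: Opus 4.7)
The plan is to mirror the strategy used in the proof of Theorem~3, but applied to the expression $PAoI = 2\mathbb{E}[S] + \mathbb{E}[D]$ from Lemma~\ref{lem1} instead of the full AoI formula. This is attractive because PAoI depends only on first moments, so the algebra should be strictly simpler than in Theorem~3 and should not require $\mathbb{E}[S^2]$ or $\mathbb{E}[D^2]$.

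First, I would reuse the first-order $\epsilon$ expansions established in the proof of Theorem~\ref{epsil0}. Writing $l$ for the number of indices $j \in [K]$ with $c_j > 1$ and $\{y_1, \ldots, y_{K-l}\} \subseteq [K]$ for the (distinct) indices with $c_j = 1$, those expansions give
\begin{align*}
\mathbb{E}[S] \approx K + \epsilon\, l, \qquad \mathbb{E}[D] \approx \epsilon \sum_{i=1}^{K-l} y_i,
\end{align*}
so that, up to $o(\epsilon)$ terms,
\begin{align*}
PAoI \approx 2K + \epsilon\left(2l + \sum_{i=1}^{K-l} y_i\right).
\end{align*}
For a fixed value of $l$, the quantity inside the parentheses is minimized by choosing the smallest possible indices $y_i$, i.e.\ $y_i = i$ for $i \in [K-l]$ (equivalently, $c_1 = \cdots = c_{K-l} = 1$ and the remaining $c_j > 1$). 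This is also forced by Theorem~\ref{increasing}, which tells us the zero-tolerance indices must form an initial segment of $[K]$ in any optimal policy.

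Substituting $\sum_{i=1}^{K-l} i = \frac{(K-l)(K-l+1)}{2}$, the remaining task reduces to the one-variable discrete optimization
\begin{align*}
\min_{l \in \{0, 1, \ldots, K\}} f(l), \qquad f(l) = 2l + \frac{(K-l)(K-l+1)}{2}.
\end{align*}
A direct evaluation at the endpoints and near the boundary shows $f(K) = 2K$, $f(K-1) = 2K - 1$, $f(K-2) = 2K-1$, and $f(l)$ grows quadratically as $l$ decreases below $K-2$, so the minimum value is $2K-1$, attained at $l = K-1$ (consistent with the claimed policy $c_1 = 1$, $c_j > 1$ for $j \geq 2$). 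Plugging back yields $PAoI^* = 2K + \epsilon(2K-1)$.

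The only genuinely new work relative to Theorem~\ref{epsil0} is the integer minimization of $f$; the algebraic obstacle is extremely mild, essentially verifying a quadratic inequality $f(l) \geq 2K-1$ for $l \leq K-3$, say by the substitution $x = K-l$ and checking $\frac{x(x+1)}{2} + 2(K-x) \geq 2K - 1$, which simplifies to $(x-1)(x-2) \geq 0$ and is immediate for integer $x \geq 1$. Hence the bottleneck, if any, is merely bookkeeping, and the proof should be short.
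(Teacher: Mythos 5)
Your proposal is correct and takes essentially the same route as the paper: the paper's proof is just the first-order identity $PAoI = 2K + 2\epsilon l + \epsilon\sum_{i=1}^{K-l} y_i$ followed by the remark ``similar analysis to Theorem~\ref{epsil0} gives $l=K-1$,'' and your reduction to minimizing $f(l)=2l+\tfrac{(K-l)(K-l+1)}{2}$ via the substitution $x=K-l$ and the inequality $(x-1)(x-2)\ge 0$ is precisely the bookkeeping the paper omits. The only wrinkle, which your own computation exposes but does not invalidate the result, is that $f(K-2)=f(K-1)=2K-1$, so to first order in $\epsilon$ the policy with $c_1=c_2=1$ ties with the stated optimum; the minimum value $2K+\epsilon(2K-1)$ is unaffected.
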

\begin{proof}
Using equations \eqref{approx0} and $(27)$ we derive
\begin{align}\label{p0}
    PAoI= 2\mathbb{E}[S]+\mathbb{E}[D] = 2K+2\epsilon l + \epsilon (\sum_{i=1}^{K-l} y_i).
\end{align}
Similar analysis to  Theorem \ref{epsil0} results in $l=K-1$ being the optimal $l$. Therefore $c_1=1$ and any $c_j>1$, for $j>1$, achieves the minimum PAoI, which is $2K+\epsilon(2K-1)$.
\end{proof}
\subsection{Characteristics of $c_i$ for PAoI Minimization}
Next, we study policies that minimize PAoI. We start with the case of $K=2$ and later discuss the case of general $K$. 

Recall that $c_1^* = 1$. Thus for $K=2$, we only need to determine $c_2^*$.
\begin{theorem} \label{paoi}
For $K=2$ and any $\epsilon$, the optimal value of $c_2$ for peak AoI minimization
lies within the following range,
\begin{align*}
        \ceil{\frac{1}{1-\epsilon}} \leq {c_2}^{*} \leq \ceil{\frac{1+\epsilon}{1-\epsilon}}.
\end{align*}
\end{theorem}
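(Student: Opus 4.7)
The plan is to reduce the discrete optimization to tracking the sign of a single scalar function $h(c)$, and then to bracket the sign change between the two ceiling expressions.

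First I would specialize Lemma~\ref{lem2} to $K=2$, $c_1=1$, $c_2=c$. This gives $p=(1-\epsilon)(1-\epsilon^{c})$, and after evaluating the (truncated) geometric sums one obtains the closed forms
\begin{align*}
\mathbb{E}[S] &= \frac{2-\epsilon-(c+2)\epsilon^{c}+(c+1)\epsilon^{c+1}}{(1-\epsilon)(1-\epsilon^{c})}, \\
\mathbb{E}[D] &= \frac{\epsilon+(c+1)(1-\epsilon)\epsilon^{c}}{(1-\epsilon)(1-\epsilon^{c})}.
\end{align*}
Combining via \eqref{formula_P} yields $PAoI(c) = M(c)/\bigl[(1-\epsilon)(1-\epsilon^{c})\bigr]$ with $M(c) = 4-\epsilon-(c+3)\epsilon^{c}+(c+1)\epsilon^{c+1}$.

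Next I would study the finite difference $PAoI(c+1)-PAoI(c)$. Since the denominator is positive, its sign equals the sign of $M(c+1)(1-\epsilon^{c}) - M(c)(1-\epsilon^{c+1})$. Expanding and collecting by $\epsilon$-order (constants, $\epsilon^{c}$, $\epsilon^{c+1}$, $\epsilon^{c+2}$, $\epsilon^{2c+1}$, $\epsilon^{2c+2}$) produces heavy cancellation, and the residual factors as $\epsilon^{c}(1-\epsilon)\bigl[c(1-\epsilon)-(1+\epsilon)+\epsilon^{c+1}\bigr]$. Define $h(c)\coloneqq c(1-\epsilon)-(1+\epsilon)+\epsilon^{c+1}$; then $\mathrm{sign}\bigl(PAoI(c+1)-PAoI(c)\bigr) = \mathrm{sign}\bigl(h(c)\bigr)$. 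A one-line check shows $h(c+1)-h(c) = (1-\epsilon)(1-\epsilon^{c+1}) > 0$, so $h$ is strictly increasing on $c \geq 1$. Consequently $PAoI(c)$ is unimodal, and its minimizer $c_2^{*}$ is the least integer $c$ satisfying $h(c)\geq 0$.

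The two bounds then follow by evaluating $h$ at the candidate endpoints. For the lower bound, take $c = \lceil 1/(1-\epsilon)\rceil - 1$, so that $c(1-\epsilon) < 1$; then $h(c) < -\epsilon + \epsilon^{c+1} \leq 0$ since $c+1 \geq 1$, forcing $c_2^{*} \geq \lceil 1/(1-\epsilon)\rceil$. For the upper bound, take $c = \lceil(1+\epsilon)/(1-\epsilon)\rceil$, so that $c(1-\epsilon) \geq 1+\epsilon$; then $h(c) \geq \epsilon^{c+1} > 0$, forcing $c_2^{*} \leq \lceil(1+\epsilon)/(1-\epsilon)\rceil$.

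The main obstacle is the algebraic bookkeeping in the finite-difference step: the six distinct $\epsilon$-orders must be tracked carefully before they collapse into the clean product $\epsilon^{c}(1-\epsilon)h(c)$. Once this identity is in hand, monotonicity of $h$ and the two endpoint evaluations finish the argument in a few lines.
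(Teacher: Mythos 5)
Your proposal is correct, and I verified the key identity: with $M(c)=4-\epsilon-(c+3)\epsilon^{c}+(c+1)\epsilon^{c+1}$ one indeed gets $M(c+1)(1-\epsilon^{c})-M(c)(1-\epsilon^{c+1})=\epsilon^{c}(1-\epsilon)\bigl[c(1-\epsilon)-(1+\epsilon)+\epsilon^{c+1}\bigr]$, and your closed forms for $\mathbb{E}[S]$ and $\mathbb{E}[D]$ agree with the paper's expression $PAoI=4+\tfrac{3\epsilon}{1-\epsilon}+\tfrac{1}{1-\epsilon}\tfrac{\epsilon^{c_2}(1-c_2+\epsilon c_2)}{1-\epsilon^{c_2}}$. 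Your route differs from the paper's in how the two bounds are extracted. The paper runs two separate arguments: for the lower bound it observes that the residual term $\tfrac{\epsilon^{c_2}(\frac{1}{1-\epsilon}-c_2)}{1-\epsilon^{c_2}}$ is nonnegative for $c_2\le\lfloor\tfrac{1}{1-\epsilon}\rfloor$ and nonpositive for $c_2\ge\lceil\tfrac{1}{1-\epsilon}\rceil$, so a minimizer lies in the latter range; for the upper bound it substitutes $c_2=\tfrac{1}{1-\epsilon}+m$ and shows via a ratio test that $f(m)=\tfrac{m\epsilon^{m}}{1-\epsilon^{1/(1-\epsilon)+m}}$ is non-increasing for $m\ge\tfrac{\epsilon}{1-\epsilon}$. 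You instead analyze the finite difference $PAoI(c+1)-PAoI(c)$ directly and reduce everything to the single strictly increasing function $h(c)=c(1-\epsilon)-(1+\epsilon)+\epsilon^{c+1}$. This buys you more than the theorem asks for: unimodality of $PAoI(c)$ and an exact characterization of the optimum as the least integer $c$ with $h(c)\ge 0$, with the two stated ceilings falling out as endpoint evaluations of $h$. The cost is the heavier bookkeeping in the cross-multiplication step, but since the six $\epsilon$-orders do collapse as you claim, the argument is complete; the only cosmetic caveat is to note that $h(1)=\epsilon(\epsilon-2)<0$, so the ``least $c$ with $h(c)\ge 0$'' characterization never collides with the constraint $c_2\ge 1$.
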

\begin{proof}
By substituting $K=2$ and $c_1=1$ in equation $PAoI=2\mathbb{E}[S]+\mathbb{E}[D]$ and simplification, we achieve 
\begin{align*}
PAoI &= 2 \frac{(1-\epsilon)^2}{(1-\epsilon)(1-\epsilon^{c_2})} (\sum_{i_2=0}^{c_2-1}  (i_2 +2) \epsilon^{i_2})\\
&+\frac{1}{(1-\epsilon)(1-\epsilon^{c_2})} (\epsilon + (1-\epsilon)(c_2 +1)\epsilon^{c_2} ) \\
 &= 4+ \frac{1}{(1-\epsilon)(1-\epsilon^{c_2})} (2(c_2-1)\epsilon^{c_2+1}-2c_2 \epsilon^{c_2} +2\epsilon) + \\ &\frac{1}{(1-\epsilon)(1-\epsilon^{c_2})} (\epsilon + (1-\epsilon)(c_2 +1)\epsilon^{c_2} )  \\
&= 4+ \frac{1}{1-\epsilon} \frac{3\epsilon + \epsilon^{c_2}(1-c_2) + \epsilon^{c_2+1} (c_2-3)}{1-\epsilon^{c_2}}  \\
&=4+ \frac{3\epsilon}{1-\epsilon} + \frac{1}{1-\epsilon} \frac{\epsilon^{c_2}(1-c_2 + \epsilon c_2)}{1-\epsilon^{c_2}}.
\end{align*}
As a result, we need to minimize the last term:
\begin{align}\label{mini}
 \min_{c_2} \frac{\epsilon^{c_2}(\frac{1}{1-\epsilon}-c_2)}{1-\epsilon^{c_2}}.
\end{align}
For $c_2 \ge \ceil{\frac{1}{1-\epsilon}}$,  the term in \eqref{mini} is less than or equal $0$, and for $c_2 \le \floor{\frac{1}{1-\epsilon}}$, it is greater than or equal $0$. Therefore ${c_{2}}^{*} \ge \ceil{\frac{1}{1-\epsilon}}$.
We define $c_2 = \frac{1}{1-\epsilon} + m$ for $m \ge 0$. Equation \eqref{mini} simplifies to
\begin{align*}
\min_{c_2} \frac{(1-\epsilon)\epsilon^{{\frac{1}{1-\epsilon}}+m}(\frac{1}{1-\epsilon}-\frac{1}{1-\epsilon} - m)}{1-\epsilon^{{\frac{1}{1-\epsilon}}+m}}= \max_{m} \frac{ m\epsilon^{m}}{1-\epsilon^{{\frac{1}{1-\epsilon}}+m}}.
\end{align*}
For $m \geq \frac{\epsilon}{1-\epsilon}$ we show that 
$f(m) \coloneqq \frac{ m\epsilon^{m}}{1-\epsilon^{{\frac{1}{1-\epsilon}}+m}}$
is non-increasing, so the optimal $m$ satisfies $m^* \le \frac{\epsilon}{1-\epsilon}$. For  $m \ge 0, 0 \le \epsilon \le 1$, we know that  $\frac{ 1-\epsilon^{{\frac{1}{1-\epsilon}}+m}}{1-\epsilon^{{\frac{1}{1-\epsilon}}+m+1}} \leq 1$. When $m \geq \frac{\epsilon}{1-\epsilon}$,
\begin{align*}
\small
    &\frac{m+1}{m} \leq \frac{1}{\epsilon} \implies 
    &\frac{ (m+1)\epsilon^{m+1}}{1-\epsilon^{{\frac{1}{1-\epsilon}}+m+1}} \leq \frac{ m\epsilon^{m}}{1-\epsilon^{{\frac{1}{1-\epsilon}}+m}} .
\end{align*}
Therefore, for $m \geq \frac{\epsilon}{1-\epsilon}$, $f(m)$ is non-increasing.
Consequently, the opitimal $c_2$ satisfies ${c_2}^{*} \leq \ceil{\frac{1+\epsilon}{1-\epsilon}}$.
\end{proof}
Finding exact values of $c_i^*$'s that minimize PAoI is hard for general $K$. However, the bound in Theorem~\ref{paoi} inspires the following general bound for $c_i^*$ and consequently an upper bound for PAoI.

\begin{theorem} \label{lower_8}
For any $K$ and $\epsilon$, the value of $c_i^*$ that minimizes the PAoI satisfies
\begin{align*}
  c_1^*=1 \text{ and }   c_i^* \geq \ceil{\frac{i-1}{1-\epsilon}}, \quad i \in [2,3,...,K].
\end{align*}
\end{theorem}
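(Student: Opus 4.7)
I fix all $c_j$ with $j\neq i$ and study $PAoI_K$ as a function of the single variable $c_i$. Writing $PAoI_K = Y_K + X_K$ with $Y_K := \mathbb{E}[S(K)]$ and $X_K := \mathbb{E}[S(K)]+\mathbb{E}[D(K)]$, and telescoping the recursions from Lemma~\ref{rec}, one gets
\[
Y_K = K + \sum_{j=1}^{K}\Bigl[\tfrac{\epsilon}{1-\epsilon} - \tfrac{c_j\epsilon^{c_j}}{1-\epsilon^{c_j}}\Bigr],\ \ X_K = \sum_{j=1}^{K}\frac{1}{(1-\epsilon)\prod_{l=j+1}^{K}(1-\epsilon^{c_l})}.
\]
The $c_i$-dependence is now explicit: $Y_K$ moves only through $-c_i\epsilon^{c_i}/(1-\epsilon^{c_i})$, and factoring $(1-\epsilon^{c_i})$ out of the $j=1,\ldots,i-1$ summands of $X_K$ shows its $c_i$-dependent part equals $X_{i-1}/[(1-\epsilon^{c_i})P]$, where $P:=\prod_{l=i+1}^{K}(1-\epsilon^{c_l})$ and $X_{i-1}$ is the prefix quantity from Lemma~\ref{rec}.

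To handle $c_1^*=1$, note that the product in $X_K$ starts at $l\ge 2$, so $X_K$ does not depend on $c_1$; thus $PAoI_K$ varies in $c_1$ only through $-c_1\epsilon^{c_1}/(1-\epsilon^{c_1})$, and a direct derivative check gives that $c\epsilon^c/(1-\epsilon^c)$ is strictly decreasing in $c>0$, so the minimum is at $c_1=1$. For $i\ge 2$, I would compute the discrete difference $PAoI_K(c_i{+}1)-PAoI_K(c_i)$ by plugging in the two identified dependences, putting everything over the common denominator $(1-\epsilon^{c_i})(1-\epsilon^{c_i+1})$, and using $\epsilon^{c_i+1}-\epsilon^{c_i}=-\epsilon^{c_i}(1-\epsilon)$. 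The algebra should collapse to
\[
PAoI_K(c_i{+}1)-PAoI_K(c_i) = \frac{\epsilon^{c_i}\bigl[(1-\epsilon)\bigl(c_i - \tfrac{X_{i-1}}{P}\bigr)-\epsilon(1-\epsilon^{c_i})\bigr]}{(1-\epsilon^{c_i})(1-\epsilon^{c_i+1})},
\]
exactly mirroring the $K=2$ formula hidden inside Theorem~\ref{paoi}, with $A+B$ there replaced by $X_{i-1}/P$.

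Once this form is in hand, the rest is short. Whenever $c_i\le X_{i-1}/P$, the bracket is strictly negative and $PAoI_K$ strictly decreases when $c_i$ is incremented, so any integer optimum obeys $c_i^*\ge \lceil X_{i-1}/P\rceil$. Because every $(1-\epsilon^{c_l})\in(0,1]$ we have $P\le 1$ and therefore $X_{i-1}/P\ge X_{i-1}$; and the recursion in the proof of Lemma~\ref{rec}, applied to the prefix of length $i-1$, gives $X_{i-1}\ge (i-1)/(1-\epsilon)$. Chaining these inequalities yields $c_i^*\ge \lceil (i-1)/(1-\epsilon)\rceil$. The main obstacle is precisely the claimed collapse in the previous paragraph: the individual differences from $Y_K$ and from $X_K$ are separately messy, and showing they combine into the clean one-line form above requires careful bookkeeping analogous to the $K=2$ calculation in Theorem~\ref{paoi}; the rest of the argument is mechanical given that cancellation.
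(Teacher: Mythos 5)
Your proposal is correct, and it reaches the result by a genuinely different route from the paper. The paper telescopes the recursion into $PAoI_K = 2K + \frac{(2K-1)\epsilon}{1-\epsilon} + \sum_{j=1}^{K-1}\frac{\epsilon^{c_{j+1}}(X_j-c_{j+1})}{1-\epsilon^{c_{j+1}}}$ and argues by contradiction: if some $c_j^*\le\floor{\frac{j-1}{1-\epsilon}}$, it is replaced by $\ceil{\frac{j-1}{1-\epsilon}}$, and the paper shows the affected term shrinks (using $X_{j-1}-c_j\ge 0$ via Lemma~\ref{rec} and the monotonicity of $\epsilon^{c}/(1-\epsilon^{c})$) while separately arguing that every downstream term shrinks because the downstream $X_i$'s decrease. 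You instead split $PAoI_K=\mathbb{E}[S(K)]+X_K$, unroll both recursions into closed form, and isolate the exact dependence on a single $c_i$; I checked the step you flagged as the main obstacle, and the unit-increment difference does collapse to
\begin{equation*}
PAoI_K(c_i{+}1)-PAoI_K(c_i)=\frac{\epsilon^{c_i}\left[(1-\epsilon)\left(c_i-\frac{X_{i-1}}{P}\right)-\epsilon(1-\epsilon^{c_i})\right]}{(1-\epsilon^{c_i})(1-\epsilon^{c_i+1})},
\end{equation*}
since $\frac{c\epsilon^{c}}{1-\epsilon^{c}}-\frac{(c+1)\epsilon^{c+1}}{1-\epsilon^{c+1}}=\frac{\epsilon^{c}[c(1-\epsilon)-\epsilon(1-\epsilon^{c})]}{(1-\epsilon^{c})(1-\epsilon^{c+1})}$ and $\frac{1}{1-\epsilon^{c+1}}-\frac{1}{1-\epsilon^{c}}=\frac{-\epsilon^{c}(1-\epsilon)}{(1-\epsilon^{c})(1-\epsilon^{c+1})}$. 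Your closed form buys two things: all downstream effects are absorbed into the single factor $1/P\ge 1$, so no separate argument about later terms is needed, and you obtain the slightly sharper necessary condition $c_i^*>X_{i-1}/P$ (hence $c_i^*\ge\ceil{X_{i-1}/P}\ge\ceil{\frac{i-1}{1-\epsilon}}$, the last step by $P\le 1$ and Lemma~\ref{rec} applied to the length-$(i-1)$ prefix, exactly as in the paper). You also supply an actual proof that $c_1^*=1$ via the strict monotonicity of $c\epsilon^{c}/(1-\epsilon^{c})$, which the paper asserts in this section without argument. What the paper's exchange argument buys in return is that it never needs the exact value of the difference, only sign and monotonicity facts. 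One small point to make explicit when writing this up: the argument fixes the other coordinates at their optimal values, so $X_{i-1}$ and $P$ are evaluated at $\mathbf{c}^*$; this is legitimate because a global optimum is in particular coordinate-wise optimal, and the bounds $P\le 1$ and $X_{i-1}\ge\frac{i-1}{1-\epsilon}$ hold for every policy.
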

\begin{proof}
Using Equations \eqref{recur} and \eqref{recur2} and following the same notation in these equations, we obtain 
\begin{align*} 
\small
PAoI_{i+1}  &= 2\mathbb{E}[S(i+1)] + \mathbb{E}[D(i+1)]\\
&=2\mathbb{E}[S(i)] + \mathbb{E}[D(i)]\\ & + 2+ \frac{2\epsilon}{1-\epsilon}
+ \frac{\epsilon^{c_{i+1}} (\mathbb{E}[S(i)]+\mathbb{E}[D(i)]-c_{i+1})}{1-\epsilon^{c_{i+1}}}\\
&= PAoI_{i}   + 2 + \frac{2\epsilon}{1-\epsilon}\\
& + \frac{\epsilon^{c_{i+1}} (\mathbb{E}[S(i)]+\mathbb{E}[D(i)]-c_{i+1})}{1-\epsilon^{c_{i+1}}}\\
    PAoI_{K} &= 2K + \frac{(2K-1)\epsilon}{1-\epsilon} \\
    &+ \sum_{j=1}^{K-1} \frac{\epsilon^{c_{j+1}} (\mathbb{E}[S_j]+\mathbb{E}[D_j]-c_{j+1})}{1-\epsilon^{c_{j+1}}}.
\end{align*}
To show that $c_i^* \geq \ceil{\frac{i-1}{1-\epsilon}}$, assume that in the optimal vector $\mathbf{c}^{*}=[c_1^*,\dots,c_K^*]$, there exists one coordinate $c_j^* \le \floor{\frac{j-1}{1-\epsilon}}$. We show that vector $\mathbf{c}^{\prime} = [c_1^*,c_2^*,...,c^{\prime}_j,c_{j+1}^*,...,c_K^*]$ for $c^{\prime}_j=\ceil{\frac{j-1}{1-\epsilon}}$ results in a smaller PAoI, which is in contradictory to the assumption that $\mathbf{c}$ is the optimal vector. 
To this end, denote by $PAoI_\mathbf{c}, PAoI_{\mathbf{c}^{\prime}}$ for PAoI resulting from vectors $\mathbf{c}, \mathbf{c}^{\prime}$, respectively. Define $X'_i=\mathbb{E}[S_i]+\mathbb{E}[D_i]$ corresponding to the vector $\mathbf{c}'$. Our goal is to prove $PAoI_{\mathbf{c}^{\prime}} <PAoI_\mathbf{c}$.
From \eqref{key_r}, it is obvious that for a given vector $\mathbf{c}$ if we change only $c_{j+1}$ we will have $X'_i = X_i, i=1,2,\dots,j$.
From \eqref{key_r}
it is clear that if we increase $c_{j+1}$ and keep the rest of $c_i$'s the same, $X'_ {i} < X_i, i=j+1,\dots,K$.
\begin{align*}
    PAoI_\mathbf{c} &= 2K+ \frac{(2K-1)\epsilon}{1-\epsilon} +  \sum_{i=1}^{j-1} \frac{\epsilon^{c_{i+1}} (X_i-c_{i+1})}{1-\epsilon^{c_{i+1}}} \\ &+\frac{\epsilon^{c_{j+1}}(X_j-c_{j+1})}{1-\epsilon^{c_{j+1}}}+ \sum_{i=j+1}^{K-1} \frac{\epsilon^{c_{i+1}} (X_i-c_{i+1})}{1-\epsilon^{c_{i+1}}},\\
    PAoI_{\mathbf{c}^{\prime}}&=2K+ \frac{(2K-1)\epsilon}{1-\epsilon} +  \sum_{i=1}^{j-1} \frac{\epsilon^{c_{i+1}} (X_i-c_{i+1})}{1-\epsilon^{c_{i+1}}} \\ &+\frac{\epsilon^{{c^{\prime}}_{j+1}}(X_j-{c^{\prime}}_{j+1})}{1-\epsilon^{{c^{\prime}}_{j+1}}}+ \sum_{i=j+1}^{K-1} \frac{\epsilon^{c_{i+1}} (X^{{\prime}}_i-c_{i+1})}{1-\epsilon^{c_{i+1}}}. 
\end{align*}
The first $3$ terms are exactly the same and for the rest we show
\begin{align}
    &\frac{\epsilon^{c_{i+1}} (X_i-c_{i+1})}{1-\epsilon^{c_{i+1}}} > 
    \frac{\epsilon^{c_{i+1}} (X^{{\prime}}_i-c_{i+1})}{1-\epsilon^{c_{i+1}}}, j+1 \le i \le K-1
   , \label{lower_bound1}\\
     &\frac{\epsilon^{c_{j+1}}(X_j-c_{j+1})}{1-\epsilon^{c_{j+1}}} > \frac{\epsilon^{{c^{\prime}}_{j+1}}(X_j-{c^{\prime}}_{j+1})}{1-\epsilon^{{c^{\prime}}_{j+1}}}.
     \label{lower_bound2}
    \end{align}
Inequality \eqref{lower_bound1} holds since $X^{\prime}_{i} < X_{i}$ for $j+1 \le i \le K-1$. 
Inequality \eqref{lower_bound2} holds by the following
\begin{align*}
   X_j-c_{j+1} &\geq X_j- \floor{\frac{j}{1-\epsilon}} \geq 0 ,\\ 
   X_j-c_{j+1} &> X_j-c^{\prime}_{j+1} 
   \\
   \frac{\epsilon^{c_{j+1}}}{1-\epsilon^{c_{j+1}}} &> \frac{\epsilon^{c^{\prime}_{j+1}}}{1-\epsilon^{c^{\prime}_{j+1}}} >0.
    \end{align*}
    \end{proof}
\begin{figure} 
    \centering
\hbox{\hspace{-0.2cm}    \includegraphics[scale=0.6]{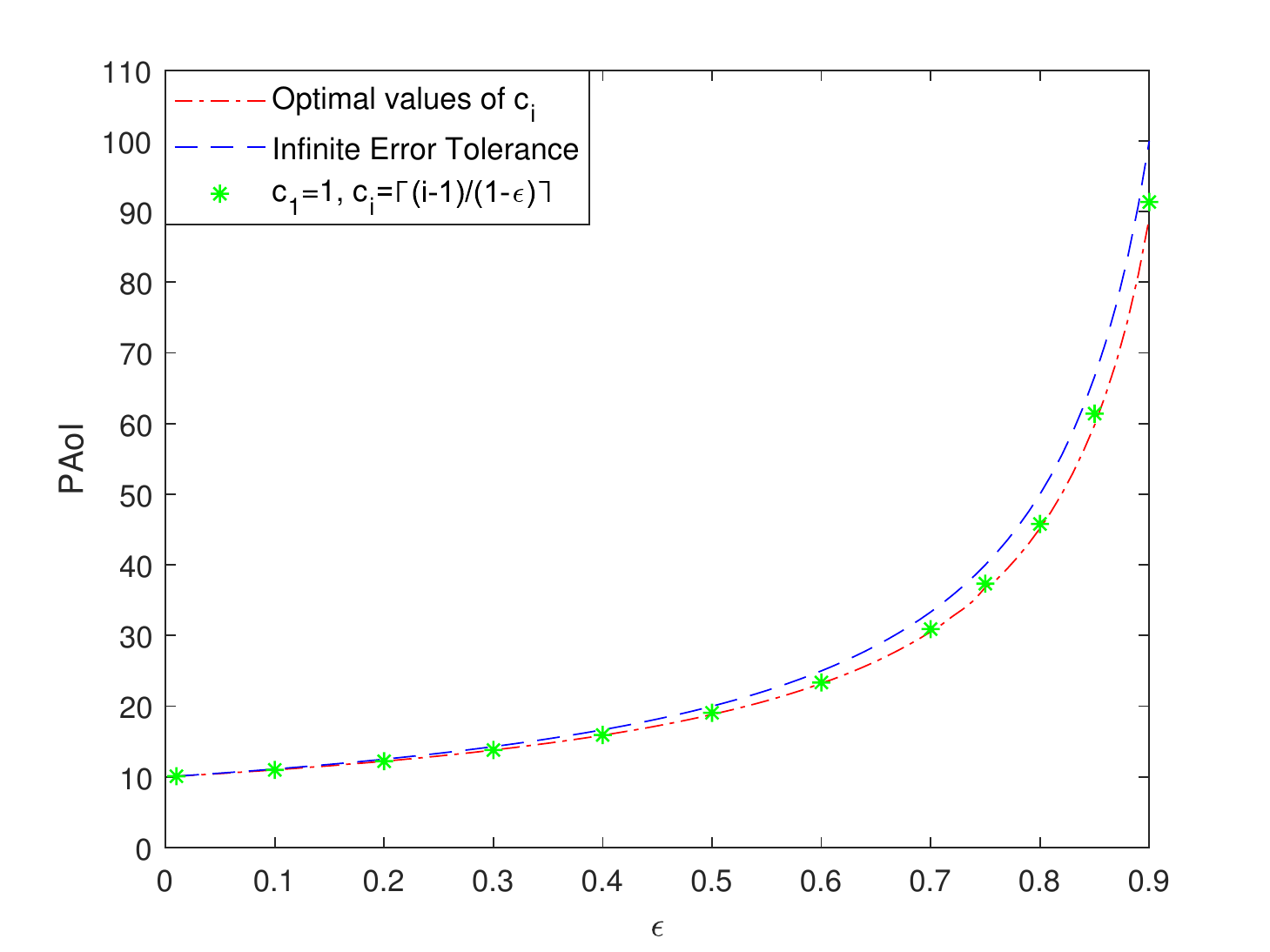} }
    \caption{Comparison of PAoI under different policies when K=5.}
    \label{comparison_paoi}
\end{figure}
In Figure \ref{comparison_paoi}, we observe that setting $c_i$'s as the lower bound from Theorem \ref{lower_8} outperforms infinite error tolerance policy and its PAoI difference from the  optimal policy is negligible.
\section{Conclusion} \label{sec:conclude}
In this paper, we studied the optimal error toleration policy for AoI minimization during transmission of an update in an erasure channel with feedback. We obtained optimal policies for special senarios and bounds of AoI for the general case.

\bibliography{reference}
\bibliographystyle{IEEEtran}

\end{document}